\newcommand{\lp}[1]{\ifthenelse{\equal{#1}{0}}{(}{}\ifthenelse{\equal{#1}{1}}{\bigl(}{}\ifthenelse{\equal{#1}{2}}{\Bigl(}{}\ifthenelse{\equal{#1}{3}}{\biggl(}{}\ifthenelse{\equal{#1}{4}}{\Biggl(}{}\ifthenelse{\equal{#1}{5}}{\Biggl(}{}}
\newcommand{\rp}[1]{\ifthenelse{\equal{#1}{0}}{)}{}\ifthenelse{\equal{#1}{1}}{\bigr)}{}\ifthenelse{\equal{#1}{2}}{\Bigr)}{}\ifthenelse{\equal{#1}{3}}{\biggr)}{}\ifthenelse{\equal{#1}{4}}{\Biggr)}{}\ifthenelse{\equal{#1}{5}}{\Biggr)}{}}
\newcommand{\lbc}[1]{\ifthenelse{\equal{#1}{0}}{\{}{}\ifthenelse{\equal{#1}{1}}{\bigl\{}{}\ifthenelse{\equal{#1}{2}}{\Bigl\{}{}\ifthenelse{\equal{#1}{3}}{\biggl\{}{}\ifthenelse{\equal{#1}{4}}{\Biggl\{}{}\ifthenelse{\equal{#1}{5}}{\Biggl\{}{}}
\newcommand{\rbc}[1]{\ifthenelse{\equal{#1}{0}}{\}}{}\ifthenelse{\equal{#1}{1}}{\bigr\}}{}\ifthenelse{\equal{#1}{2}}{\Bigr\}}{}\ifthenelse{\equal{#1}{3}}{\biggr\}}{}\ifthenelse{\equal{#1}{4}}{\Biggr\}}{}\ifthenelse{\equal{#1}{5}}{\Biggr\}}{}}
\newcommand{\lba}[1]{\ifthenelse{\equal{#1}{0}}{\langle}{}\ifthenelse{\equal{#1}{1}}{\bigl\langle}{}\ifthenelse{\equal{#1}{2}}{\Bigl\langle}{}\ifthenelse{\equal{#1}{3}}{\biggl\langle}{}\ifthenelse{\equal{#1}{4}}{\Biggl\langle}{}\ifthenelse{\equal{#1}{5}}{\Biggl\langle}{}}
\newcommand{\rba}[1]{\ifthenelse{\equal{#1}{0}}{\rangle}{}\ifthenelse{\equal{#1}{1}}{\bigr\rangle}{}\ifthenelse{\equal{#1}{2}}{\Bigr\rangle}{}\ifthenelse{\equal{#1}{3}}{\biggr\rangle}{}\ifthenelse{\equal{#1}{4}}{\Biggr\rangle}{}\ifthenelse{\equal{#1}{5}}{\Biggr\rangle}{}}
\newcommand{\ve}[1]{\ifthenelse{\equal{#1}{0}}{|}{}\ifthenelse{\equal{#1}{1}}{\big|}{}\ifthenelse{\equal{#1}{2}}{\Big|}{}\ifthenelse{\equal{#1}{3}}{\bigg|}{}\ifthenelse{\equal{#1}{4}}{\Bigg|}{}\ifthenelse{\equal{#1}{5}}{\Bigg|}{}}
\newcommand{\lb}[1]{\ifthenelse{\equal{#1}{0}}{[}{}\ifthenelse{\equal{#1}{1}}{\bigl[}{}\ifthenelse{\equal{#1}{2}}{\Bigl[}{}\ifthenelse{\equal{#1}{3}}{\biggl[}{}\ifthenelse{\equal{#1}{4}}{\Biggl[}{}\ifthenelse{\equal{#1}{5}}{\Biggl[}{}}
\newcommand{\rb}[1]{\ifthenelse{\equal{#1}{0}}{]}{}\ifthenelse{\equal{#1}{1}}{\bigr]}{}\ifthenelse{\equal{#1}{2}}{\Bigr]}{}\ifthenelse{\equal{#1}{3}}{\biggr]}{}\ifthenelse{\equal{#1}{4}}{\Biggr]}{}\ifthenelse{\equal{#1}{5}}{\Biggr]}{}}
\newcommand{\srp}[3]{\ifthenelse{\equal{#1}{0}}{)^{#2}_{#3}}{}\ifthenelse{\equal{#1}{1}}{\bigr)^{#2}_{#3}}{}\ifthenelse{\equal{#1}{2}}{\Bigr)^{#2}_{#3}}{}
\ifthenelse{\equal{#1}{3}}{\biggr)^{#2}_{#3}}{}\ifthenelse{\equal{#1}{4}}{\Biggr)^{#2}_{#3}}{}\ifthenelse{\equal{#1}{5}}{\Biggr)^{#2}_{#3}}{}}
\newcommand{\srb}[3]{\ifthenelse{\equal{#1}{0}}{]^{#2}_{#3}}{}\ifthenelse{\equal{#1}{1}}{\bigr]^{#2}_{#3}}{}\ifthenelse{\equal{#1}{2}}{\Bigr]^{#2}_{#3}}{}
\ifthenelse{\equal{#1}{3}}{\biggr]^{#2}_{#3}}{}\ifthenelse{\equal{#1}{4}}{\Biggr]^{#2}_{#3}}{}\ifthenelse{\equal{#1}{5}}{\Biggr]^{#2}_{#3}}{}}
\newcommand{\srbc}[3]{\ifthenelse{\equal{#1}{0}}{{\}}^{#2}_{#3}}{}\ifthenelse{\equal{#1}{1}}{\bigr{\}}^{#2}_{#3}}{}\ifthenelse{\equal{#1}{2}}{\Bigr{\}}^{#2}_{#3}}{}
\ifthenelse{\equal{#1}{3}}{\biggr{\}}^{#2}_{#3}}{}\ifthenelse{\equal{#1}{4}}{\Biggr{\}}^{#2}_{#3}}{}\ifthenelse{\equal{#1}{5}}{\Biggr{\}}^{#2}_{#3}}{}}
\newcommand{\srba}[3]{\ifthenelse{\equal{#1}{0}}{\rangle^{#2}_{#3}}{}\ifthenelse{\equal{#1}{1}}{\bigr\rangle^{#2}_{#3}}{}\ifthenelse{\equal{#1}{2}}{\Bigr\rangle^{#2}_{#3}}{}
\ifthenelse{\equal{#1}{3}}{\biggr\rangle^{#2}_{#3}}{}\ifthenelse{\equal{#1}{4}}{\Biggr\rangle^{#2}_{#3}}{}\ifthenelse{\equal{#1}{5}}{\Biggr\rangle^{#2}_{#3}}{}}
\newcommand{\sve}[3]{\ifthenelse{\equal{#1}{0}}{|^{#2}_{#3}}{}\ifthenelse{\equal{#1}{1}}{\bigr|^{#2}_{#3}}{}\ifthenelse{\equal{#1}{2}}{\Bigr|^{#2}_{#3}}{}
\ifthenelse{\equal{#1}{3}}{\biggr|^{#2}_{#3}}{}\ifthenelse{\equal{#1}{4}}{\Biggr|^{#2}_{#3}}{}\ifthenelse{\equal{#1}{5}}{\Biggr|^{#2}_{#3}}{}}
\newcommand{\im}{\mathcal{I}m}
\newcommand{\re}{\mathcal{R}e}
\newcommand{\sh}{\ensuremath{\tmop{sh}}}
\newcommand{\cylindreb}[1]{\ensuremath{\tilde{S}_{#1}}}
\newcommand{\paraboleb}[1]{\ensuremath{S_{#1}}}
\newcommand{\disqueb}[1]{\ensuremath{D_{#1}}}
\newcommand{\demidisqueb}[1]{\ensuremath{D_{#1}^+}}
\newcommand{\demiplanb}{\ensuremath{\mathbbm{C}^+}}
\newcommand{\nevb}[1]{\ensuremath{\tilde{D}_{#1}}}
\newcommand{\parabole}[1]{{\paraboleb{#1}}}
\newcommand{\disque}[1]{{\disqueb{#1}}}
\newcommand{\demidisque}[1]{{\demidisqueb{#1}}}
\newcommand{\demiplan}[1]{{\demiplanb}#1}
\newcommand{\nev}[1]{{\nevb{#1}}}
\newcommand{\cylindre}[1]{{\cylindreb{#1}}}
\newcommand{\AMSclass}[1]{{\textbf{A.M.S. subject classification:} #1}}
\newcommand{\assign}{:=}
\newcommand{\keywords}[1]{{\textbf{Keywords:} #1}}
\newcommand{\nonesep}{}
\newcommand{\tmdummy}{$\mbox{}$}
\newcommand{\tmmathbf}[1]{\ensuremath{\boldsymbol{#1}}}
\newcommand{\tmop}[1]{\ensuremath{\operatorname{#1}}}
\newcommand{\tmscript}[1]{\text{\scriptsize{$#1$}}}
\newcommand{\tmstrong}[1]{\textbf{#1}}
\newcommand{\tmtextbf}[1]{{\bfseries{#1}}}
\newcommand{\tmtextit}[1]{{\itshape{#1}}}
\newcommand{\tmtextmd}[1]{{\mdseries{#1}}}
\newcommand{\tmtextup}[1]{{\upshape{#1}}}
\newenvironment{itemizedot}{\begin{itemize} }{\end{itemize}}
\newenvironment{proof}{\noindent\textbf{Proof\ }}{\hspace*{\fill}$\Box$\medskip}
\numberwithin{equation}{section}  
\numberwithin{figure}{section}  
\newtheorem{theorem}{Theorem}[section]
\newtheorem{lemma}[theorem]{Lemma}
\newtheorem{proposition}[theorem]{Proposition}
\newtheorem{corollary}[theorem]{Corollary}
\newtheorem{definition}[theorem]{Definition}
\newtheorem{remark}[theorem]{Remark}
\begin{document}

\title{Borel summation of the small time expansion of the heat kernel. The
scalar potential case\thanks{This paper has been written using the GNU TEXMACS scientific text editor.}
\thanks{\keywords{heat kernel, quantum mechanics,
semi-classical, asymptotic expansion, Wigner-Kirkwood expansion, Borel
summation, Poisson formula, Wiener formula, Feynman formula}; \AMSclass{35K08,
30E15, 35C20, 81Q30}}}\author{Thierry Harg\'e}\date{January 21, 2013}\maketitle

\begin{abstract}
  Let $p_t$ be the heat kernel associated to the operator $- \Delta + V (x)$
  defined on $\mathbbm{R}^{\nu}$. We prove, under rather strong assumptions on
  $V$, that the small time expansion of $p_t$ is Borel summable. If $V$ is
  defined on the torus, we prove a Poisson formula.
\end{abstract}

\section{Introduction}

Let $\nu \in \mathbbm{N}^{\ast}$ and $V$ be a regular square matrix-valued
function on $\mathbbm{R}^{\nu}$. Denote $\partial^2_x \assign \partial^2_{x_1}
+ \cdots + \partial_{x_{\nu}}^2$ and $(x - y)^2 : = (x_1 - y_1)^2 + \cdots +
(x_{\nu} - y_{\nu})^2$ for $x \in \mathbbm{R}^{\nu}$and $y \in
\mathbbm{R}^{\nu}$. Let $p_t (x, y)$ be the heat kernel associated to the
operator $- \partial^2_x + V (x)$. Let $p_t^{\tmop{conj}} (x, y)$ be the
conjugate heat kernel defined by
\begin{equation}
  \label{intro0} \text{$p_t (x, y)$=$(4 \pi t)^{- \frac{\nu}{2}} \exp \lp{2} -
  \frac{(x - y)^2}{4 t} \rp{2} p_t^{\tmop{conj}} (x, y)$} .
\end{equation}
Then the Minakshisundaram-Pleijel asymptotic expansion holds:
\begin{equation}
  \label{intro1} \text{$p_t^{\tmop{conj}} (x, y) =$} \mathbbm{1} + a_1 (x, y)
  t + \cdots + a_{r - 1} (x, y) t^{r - 1} + t^r \mathcal{O}_{t \rightarrow
  0^+} (1) .
\end{equation}
Here $\mathbbm{1}$ denotes the identity matrix. The expansion in
(\ref{intro1}) is not convergent in general. A goal of this paper is to prove,
under rather strong assumptions on $V$, that this expansion is Borel summable
and that its Borel sum is equal to $p_t^{\tmop{conj}} (x, y)$ (see definition
\ref{definitionbarbara2.5}). Borel summability allows to recover
$p_t^{\tmop{conj}} (x, y)$ with the help of the knowledge of the coefficients
$a_1 (x, y), a_2 (x, y), \ldots$ For instance, if these coefficients vanish
then $\text{$p_t^{\tmop{conj}} (x, y) =$} \mathbbm{1}$.

Assume now that $V$ is defined on the torus $\mathbbm{(R} /
\mathbbm{Z})^{\nu}$ with values in a space of $d \times d$ Hermitian matrices.
Let $\lambda_1 \leqslant \lambda_2 \leqslant \cdots \leqslant \lambda_n
\leqslant \cdots, \lambda_n \rightarrow + \infty$ \ be the eigenvalues of the
operator$- \partial^2_x + V (x)$ acting on periodic $\mathbbm{C}^d$-valued
functions. The trace of the heat kernel has an asymptotic expansion
\begin{equation}
  \label{intro5} \sum_{n = 1}^{+ \infty} e^{- \lambda_n t} = (4 \pi t)^{-
  \frac{\nu}{2}} \lp{1} d + a_1 t + \cdots + a_{r - 1} t^{r - 1} + t^r
  \mathcal{O}_{t \rightarrow 0^+} (1) \rp{1} .
\end{equation}
We shall prove the Poisson formula: for $t \in \mathbbm{C}$, $\mathcal{R} et >
0$,
\begin{equation}
  \label{intro6} \sum_{n = 1}^{+ \infty} e^{- \lambda_n t} = (4 \pi t)^{-
  \frac{\nu}{2}} \sum_{q \in \mathbbm{Z}^{\nu}} e^{- \frac{q^2}{4 t}} u_q (t),
\end{equation}
\begin{equation}
  \label{intro6.5} \text{with \ \ } u_q (t) = d + a_{1, q} t + \cdots + a_{r -
  1, q} t^{r - 1} + \cdots, q \in \mathbbm{Z}^{\nu} .
\end{equation}
In (\ref{intro6.5}), each expansion is Borel summable and $u_q$ denotes the
Borel sum of such an expansion. Since $a_{1, 0} = a_1$, $a_{2, 0} = a_2$,
$\ldots$, the expansion in (\ref{intro5}) is Borel summable but (\ref{intro6})
shows us that the knowledge of the coefficients $a_1, a_2 \ldots$ does not
allow one to recover the left hand side of (\ref{intro5}) by Borel summation.

Let us now state more precisely the assumptions on $V$. Let $\alpha \in
\mathbbm{R}$ and let $\mu$ be a Borel measure on $\mathbbm{R}^{\nu}$ with
values in some complex finite dimensional space of square matrices, such that,
for some $\varepsilon > 0$
\begin{equation}
  \label{intro3} \int_{\mathbbm{R}^{\nu}} \exp (\varepsilon \xi^2) d| \mu |
  (\xi) < + \infty .
\end{equation}
We suppose that $V (x) = \alpha x^2 - c (x)${\footnote{The minus sign in front
of $c (x)$ is chosen for future convenience.}} where
\[ c (x) = \int_{\mathbbm{R}^{\nu}} \exp (ix \cdot \xi) d \mu (\xi) . \]
$c$ must be viewed as a perturbation. Assuming $c$ is complex valued instead
of matrix valued does not simplify the method and does not change the results.
In particular, (\ref{intro3}) implies that $V$ is analytic on
$\mathbbm{C}^{\nu}$ and $c$ is bounded on $\mathbbm{R}^{\nu}$. In fact, our
results hold if $\alpha x^2$ is replaced by an arbitrary quadratic form on
$\mathbbm{R}^{\nu}$. For the sake of simplicity, we choose not to write the
proofs in this case (see [Ha6] for a generalization of the deformation
formula). Note also that functions $c$ such that $c (x) : = \exp (i x_1)$ or \
$c (x) : = \exp (- x^2)$ satisfy our assumptions.

Quantum mechanics gives many examples of divergent expansions. We focus on
semi-classical expansions related to the Schr\"odinger equation since they
present a lot of similarities with the small time expansion of the heat
kernel. The semi-classical viewpoint allows to expand a quantum quantity in
terms of powers of $h$. The coefficients of this expansion can be viewed as
classical quantities. Giving a meaning to the sum of this expansion by using
only the coefficients allows one to recover the quantum quantity by means of
classical quantities [B-B, V1, V2]. The same interpretation holds for the
expansion of the heat kernel for small $t$ (the parameter $t$ may be viewed as
$\beta$, the inverse of the temperature). The heat kernel of an operator can
be viewed as a quantum quantity: for instance, its trace, if it exists, gives
the partition function of the spectrum of the operator. The coefficients of
the expansion have a classical interpretation: for instance in (\ref{intro0}),
the main term in the exponential is the square of the distance between $x$ and
$y$, which is a classical quantity. See also Remark 8.4 in [Ha3].

Recovering quantum quantities with the help of the coefficients of their
semi-classical expansion is a question considered by Voros [V1, V2] and
Delabaere, Dillinger, Pham [D-D-P] for the one dimensional Schr\"odinger
equation. Their use of Borel summation is not elementary as ours. Their
assumptions, which allows to consider tunnelling for instance, involve to deal
with ramified singularities in the Borel plane (Ecalle's alien calculus). In a
following paper [Ha8], we prove that the $h$ expansion of the partition
function of the Schr\"odinger operator is Borel summable in all dimensions but
with restrictive assumptions on the potential allowing a simple Borel
summation process.

Concerning the expansion of the heat kernel for small $t$, we are not aware of
references using Borel summation. However, Colin de Verdi\`ere [Co1, Co2]
gives a Poisson formula in the case of a smooth Riemannian manifold. The
setting of his work is much more general than ours, but the result is
asymptotic and does not give an exact formula. The exact formula on the torus
is a direct consequence of the result on $\mathbbm{R}^{\nu}$: it is just
putting together independent expansions (one more time, we do not need alien
calculus). The case of the torus is very simple but singular.

It is convenient to write the potential $c$ as the Fourier transform of a
Borel measure. This point of view is used by many authors [It, Ga, A-H],
working with a rigorous definition of Wiener and Feynman integrals. The
assumptions on $c$ and the method allow to consider in a natural way $p_t (x,
y)$ with $x, y \in \mathbbm{C}^{\nu}, t \in \mathbbm{C}, \re t \geqslant 0$
instead of $x, y \in \mathbbm{R}^{\nu}, t \in \mathbbm{R} \cup i\mathbbm{R}$.

Our proofs use a so-called deformation formula and a so-called deformation
matrix. In the free case (i.e $V (x) = \alpha x^2 - c (x)$ with $\alpha = 0$),
this formula can be found in [It, Formula (77)]. In this reference, the
deformation matrix is not given explicitly. One can also give a formula for
the solution of the heat or Schr\"odinger equation with an arbitrary initial
condition (of course no factorization occurs contrary to the heat kernel
case), see \ [Ga, A-H] and more precisely [Ga, Eq.(27)] and [A-H,
Eq.(3.12),(5.16)]. In these references, the deformation formula is a mean to
study Feynman integral; the complex viewpoint, which is straightforward in the
free case, is not considered. In a heuristic way, this formula is well known
[On] and can even be considered as a particular case of a more general one
[Ge-Ya]. In this setting, there is no reason to consider $c$ as the Fourier
transform of a Borel measure. See also [Fu-Os-Wi] and [Ha3]. We strongly
advise the reader to look at the known heuristic proof of this formula given
in the Appendix. This proof, which uses Wiener representation of the heat
kernel and Wick's theorem, gives an explanation of the shape of the formula
and an interpretation of a so-called deformation matrix (see also [It]) which
is important in our method. However, there is a serious drawback in thinking
of the deformation formula as a consequence of the existence of Wiener or
Feynman integrals, in particular from a rigorous point of view. The
deformation formula is elementary: it does not use sophisticated notions about
infinite-dimensional spaces. Another heuristic proof, avoiding Wiener
representation, can be found in [On]. Here Wick's theorem also plays a central
role. In Section 4.2, we give a rigorous proof of the deformation formula,
which avoids Wiener representation but does not explain in a satisfactory way
the shape of the formula. We call it deformation formula because we want to
emphasize its perturbative nature. It is therefore not surprising that the
assumptions (\ref{anna1}) (the same as those used for the mathematical
foundation of the Feynman integrals [A-H]) or (\ref{anna12}) on the function
$c$ are highly restrictive. For instance, the heat kernel associated to the
operator $\partial_x^2 - x^2$ can not be viewed as a perturbation of the heat
kernel associated to the operator $\partial_x^2$ by this formula.

This formula gives an explicit solution of the heat equation viewed as a
partial differential equation which is available for every $\alpha \in
\mathbbm{R}$ and small $t \in \mathbbm{C}$ with $\re t > 0$. For this purpose
some properties of the deformation matrix for these values of $t$ are needed
(Section 4.1). At last, let us say a few words about the unicity problem. If
$\alpha \leqslant 0$, the heat kernel is uniquely defined as the kernel of an
analytic semi-group [Pa]. In the case $\alpha > 0$, see [Ha7].

The paper is organized as follows. In Section 2, we present some notation and
recall some classical facts about Borel summation. In Section 3, we state the
main results and give their proofs in Sections 4.3 and 4.4.

\medskip
{\small{\tmtextit{Acknowlegement}: I would like to express my gratitude to
Vladimir Georgescu for his pertinent advices. He also suggested many
improvements in the redaction.}}

\section{Preliminaries}

For $z = |z|e^{i \theta} \in \mathbbm{C}$, $\theta \in] - \pi, \pi]$, we
denote $z^{1 / 2} \assign |z|^{1 / 2} e^{i \theta / 2}$. Let $T > 0$. \ Let
$\demiplan{\assign \{z \in \mathbbm{C} | \re (z) > 0\}}$, $\disque{T} \assign
\{z \in \mathbbm{C} | |z| < T\}$, $\demidisque{T} \assign \disque{T} \cap
\demiplan{}$ and $\nev{T} \assign \lbc{1} z \in \mathbbm{C} | \re (
\frac{1}{z}) > \frac{1}{T} \rbc{1}$. $\nev{T}$ is the open disk of center
$\frac{T}{2}$ and radius $\frac{T}{2}$.

\begin{figure}[h]
 \caption{\label{cercle}}
 \centering
  \begin{pspicture}(-3,-2.4)(3,3.2)
  \psset{unit=1}
\psline[linewidth=0.1mm]{->}(-1,0)(3.5,0)
\psline[linewidth=0.1mm]{->}(0,-2.7)(0,2.7)
\parametricplot{0}{180}{2.5 t sin mul 2.5 t cos mul}
\parametricplot{0}{380}{1.25 1.25 t sin mul add 1.25 t cos mul}
\psline[linewidth=0.1mm]{<->}(1.25,0)(0.2,-0.7)
\psframe[linewidth=0mm,fillstyle=hlines](-1,-2.9)(0,2.9)
\uput[0](0.8,0.5){$\nev{T}$}
\uput[0](0.5,1.7){$\demidisque{T}$}
\uput[0](2,2.5){$\demiplan$}
\uput[0](3.5,0){$\re t$}
\uput[0](-0.1,2.8){$\im t$}
\uput[0](0.5,-0.6){$\frac{T}{2}$}
\end{pspicture}
 \end{figure}

Let $\kappa > 0$. Let $\cylindre{\kappa} \assign \lbc{1} z \in \mathbbm{C}
\ve{0} d (z, [0, + \infty [) < \kappa \rbc{1}$ \ and \ \ $\parabole{\kappa}
\assign \lbc{1} z \in \mathbbm{C} | | \mathcal{I} mz^{1 / 2} |^2 < \kappa
\rbc{1} = \lbc{1} z \in \mathbbm{C} | \mathcal{R} ez > \frac{1}{4 \kappa}
\mathcal{I} m^2 z - \kappa \rbc{1}$. $\parabole{\kappa}$ is the interior of a
parabola which contains $\cylindre{\kappa}$.
\FloatBarrier

\begin{figure}[h]
\caption{\label{parabole}}
\centering
  \begin{pspicture}(-2,-2.7)(6,3)
\psline[linewidth=0.1mm]{->}(-2.5,0)(5,0)
\psline[linewidth=0.1mm]{->}(0,-2)(0,2.3)
\parametricplot{0}{-180}{0.9 t sin mul 0.9 t cos mul}
\psline(0,0.9)(4,0.9)
\psline(0,-0.9)(4,-0.9)
\parametricplot{-2.5}{2.5}{-0.9 0.65 t t mul mul add t}
\psline[linewidth=0.1mm]{<->}(0,0)(-0.4,0.8)
\uput[0](5,0){$\re \tau$}
\uput[0](-0.3,2.5){$\im \tau$}
\uput[0](1,0.6){$\cylindre{\kappa}$}
\uput[0](2,1.6){$\parabole{\kappa}$}
\uput[0](-0.7,0.35){$\kappa$}
\end{pspicture}
\end{figure}

Let $\Omega$ be an open domain in $\mathbbm{C}^m$ and let $F$ be a complex
finite dimensional space. We denote by $\mathcal{A} (\Omega)$ the space of
$F$-valued analytic functions on $\Omega$, if there is no ambiguity on $F$.

Let $\mathfrak{B}$ denote the collection of all Borel sets on
$\mathbbm{R}^m$. An $F$-valued measure $\mu$ on $\mathbbm{R}^m$ is an
$F$-valued measurable function on $\mathfrak{B}$ satisfying the classical
countable additivity property (cf. [Ru]). Let $| \cdot |$ be a norm on $F$. We
denote by $| \mu |$ the positive measure defined \ by
\[ | \mu | (E) = \sup \sum_{j = 1}^{\infty} | \mu (E_j) | (E \in
   \mathfrak{B}), \]
the supremum being taken over all partition $\{E_j \}$ of $E$. In particular,
$| \mu | (\mathbbm{R}^m) < \infty$. Let us remark that $d \mu = h d| \mu |$
where $h$ is some $F$-valued function satisfying $|h| = 1$ $| \mu |$-a.e. If
$f$ is an $F$-valued measurable function on $\mathbbm{R}^m$ and $\lambda$ is a
positive measure on $\mathbbm{R}^m$ such that $\int_{\mathbbm{R}^m} |f| d
\lambda < \infty$, one can define an $F$-valued measure $\mu$ by setting $d
\mu = f d \lambda$. Then $d| \mu | = |f |d \lambda$.

We work with finite dimensional spaces of square matrices. We always consider
multiplicative norms on these spaces ($|AB| \leqslant |A\|B|$, for $A$ and $B$
square matrices) and we assume that $| \mathbbm{1} | = 1$. For $A = (a_{i,
j})_{1 \leqslant i, j \leqslant d}$ with $a_{i, j} \in \mathbbm{C}$, we set
$A^{\ast} = ( \bar{a}_{j, i})_{1 \leqslant i, j \leqslant d}$. For $\lambda,
\mu \in \mathbbm{C}^{\nu}$, we denote $\lambda \cdot \mu \assign \lambda_1
\mu_1 + \cdots + \lambda_{\nu} \mu_{\nu}$, $\bar{\lambda} \assign (
\bar{\lambda}_1, \ldots, \bar{\lambda}_{\nu})$, $\mathcal{I} m \lambda \assign
( \mathcal{I} m \lambda_1, \ldots, \mathcal{I} m \lambda_{\nu})$, $\lambda^2
\assign \lambda \cdot \lambda$, $| \lambda | \assign (\lambda \cdot
\bar{\lambda})^{1 / 2}$ (if $\lambda \in \mathbbm{R}^{\nu}$, $| \lambda | =
\sqrt{\lambda^2}$). In the whole paper, sums indexed by an empty set are, by
convention, equal to zero.

Here is an improved version of a theorem of Watson. This is in fact a theorem
of Nevanlinna, rediscovered by Sokal [So]. It gives a concise presentation of
what we need about Borel summation. In what follows, functions are defined on
some subset of $\mathbbm{C}$ and take their values in a complex finite
dimensional space $F$. First, we need:

\begin{definition}
  Let $\kappa > 0$ and $T > 0$. We say that
  \begin{itemizedot}
    \item A function $f$ satisfies $\mathcal{P}_{\kappa, T}$ if and only if
    $f$ is analytic on {\nev{T}} (fig. \ref{cercle}) and there exist $a_0,
    a_1, \ldots \in F$, $R_0, R_1, \ldots$ analytic functions on $\nev{T}$
    such that, for every $r \geqslant 0${\footnote{If $r = 0$, the expansion
    must be read $f (t) = R_0 (t)$, by the previous convention.}} and $t \in
    \nev{T}$,
    \[ \text{$f (t) = a_0 + \cdots + a_{r - 1} t^{r - 1} + R_r (t)$}, \]
    and for every $\bar{\kappa} < \kappa$, $\bar{T} < T$, there exists $K > 0$
    such that, for every $r \geqslant 0$ and $t \in \nev{\bar{T}}$,
    \[ |R_r (t) | \leqslant K \frac{r!}{\bar{\kappa}^r} |t|^r \]
    \item A function{\footnote{In general, we denote functions defined on the
    Borel plane by a hat.}} $\text{$\hat{f}$}$ satisfies $\mathcal{Q}_{\kappa,
    T}$ if and only if $\hat{f}$ is analytic on $\cylindre{_{\kappa}}$ (fig.
    \ref{parabole}){\tmstrong{}} and for every $\bar{\kappa} < \kappa$,
    $\bar{T} < T$, there exists $K > 0$ such that, for every $\tau \in
    \cylindre{_{\bar{\kappa}}}$
    \begin{equation}
      \label{barbara2} | \hat{f} (\tau) | \leqslant Ke^{^{} \frac{| \tau
      |}{\bar{T}}} .
    \end{equation}
  \end{itemizedot}
\end{definition}

This definition is justified by the following theorem and remark. Note that,
if $f$ satisfies $\mathcal{P}_{\kappa, T}$, the coefficients $a_n$ are
uniquely determined.

\begin{theorem}
  \label{theorembarbara0} Let $\kappa > 0$ and $T > 0$.
  \begin{itemizedot}
    \item If $f$ verifies $\mathcal{P}_{\kappa, T}$, then
    \begin{equation}
      \label{barbara4} \text{$\hat{f} (\tau) \assign \sum_{r = 0}^{\infty}
      \frac{a_r}{r!} \tau^r$}
    \end{equation}
    admits an analytic continuation on $\cylindre{\kappa}$ which verifies
    $\mathcal{Q}_{\kappa, T}$.
    
    \item If $\hat{f}$ verifies $\mathcal{Q}_{\kappa, T}$, then
    \begin{equation}
      \label{barbara6} \text{$f (t) \assign \int_0^{+ \infty} \hat{f} (\tau)
      e^{- \frac{\tau}{t}} \frac{d \tau}{t}$}
    \end{equation}
    verifies $\mathcal{P}_{\kappa, T}$.
    
    \item $\hat{f}$ given by (\ref{barbara4}) is called the Borel transform of
    $f$. $f$ given by (\ref{barbara6}) is called the Laplace transform of
    $\hat{f}$. These two transforms are inverse each to other.
  \end{itemizedot}
\end{theorem}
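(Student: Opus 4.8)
The plan is to follow Nevanlinna's argument (as in Sokal [So]), proving the two implications separately and then deducing the inversion. Throughout it is convenient to set $s=1/t$, which maps the disc $\nev{T}=\{\re(1/t)>1/T\}$ onto the half-plane $\{\re s>1/T\}$ and turns the pair (\ref{barbara4})--(\ref{barbara6}) into an ordinary Laplace transform $(\mathcal L\hat f)(s)=s^{-1}f(1/s)$ and its inverse. For the first implication I would begin with the coefficients: from $f(t)=a_0+\cdots+a_{r-1}t^{r-1}+R_r(t)$ and the same identity at order $r+1$ one gets $R_r=a_r t^r+R_{r+1}$, hence on $\nev{\bar T}$
\[ |a_r|\,|t|^r\le|R_r(t)|+|R_{r+1}(t)|\le K\frac{r!}{\bar\kappa^r}|t|^r\Bigl(1+\frac{(r+1)|t|}{\bar\kappa}\Bigr); \]
letting $|t|\to 0$ gives $|a_r|\le K\,r!/\bar\kappa^r$ for every $\bar\kappa<\kappa$. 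Therefore $\hat f(\tau)=\sum_r\frac{a_r}{r!}\tau^r$ has radius of convergence $\ge\kappa$ and is analytic on the disc $\{|\tau|<\kappa\}\subset\cylindre{\kappa}$.

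The substantial step — and the one I expect to be the main obstacle — is the analytic continuation of $\hat f$ from this disc to all of $\cylindre{\kappa}$ together with the sharp exponential bound (\ref{barbara2}); this is precisely what distinguishes the Nevanlinna--Sokal statement from Watson's. For $N\ge 1$ put $g_N(\zeta):=R_N(1/\zeta)/\zeta$; since $1/\zeta\in\nev{\bar T}$ when $\re \zeta>1/\bar T$, the remainder estimate gives $|g_N(\zeta)|\le K\frac{N!}{\bar\kappa^N}|\zeta|^{-N-1}$ there. I would then \emph{define} the continuation by the truncated inverse-Laplace formula
\[ \hat f(\tau)=\sum_{n=0}^{N-1}\frac{a_n}{n!}\tau^n+\frac{1}{2\pi i}\int_{\mathcal C_\tau}e^{\tau\zeta}\,g_N(\zeta)\,d\zeta, \]
where $\mathcal C_\tau$ lies in $\{\re \zeta>1/\bar T\}$: the vertical line $\re \zeta=c$ when $\tau>0$ is real, and bent/tilted appropriately (depending on $\tau$) for general $\tau\in\cylindre{\kappa}$, so as to keep $|e^{\tau\zeta}|$ controlled along $\mathcal C_\tau$ without leaving the half-plane. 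One must then check: for real $\tau$ the integral converges absolutely (polynomial decay of $g_N$ against bounded $|e^{\tau\zeta}|$) and is independent of $N\ge 1$ (subtract $a_N\zeta^{-N-1}$ and use $\frac{1}{2\pi i}\int_{\mathcal C}e^{\tau\zeta}\zeta^{-N-1}d\zeta=\tau^N/N!$) and of $c$; the formula extends to an analytic function on $\cylindre{\kappa}$ agreeing with the power series on $\{|\tau|<\kappa\}$; and, by optimising $N$ via Stirling against the bound on $g_N$, it satisfies $|\hat f(\tau)|\le K'e^{|\tau|/\bar T}$ on $\cylindre{\bar\kappa}$. Choosing the contours uniformly enough to make all of this hold simultaneously is where essentially all the difficulty lies. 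As a by-product, applying Fubini to $\int_0^{+\infty}\hat f(\tau)e^{-\tau/t}\frac{d\tau}{t}$, with $\hat f$ on $[0,+\infty)$ replaced by the truncated formula, and then evaluating the resulting $\zeta$-integral by the residue at $\zeta=1/t$ (the only pole in the relevant half-plane), gives $\int_0^{+\infty}\hat f(\tau)e^{-\tau/t}\frac{d\tau}{t}=f(t)$ on $\nev{T}$; this is one half of the inversion.

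For the second implication, take $\hat f$ satisfying $\mathcal Q_{\kappa,T}$. On $\mathbbm{R}^+\subset\cylindre{\bar\kappa}$ one has $|\hat f(\tau)e^{-\tau/t}/t|\le K|t|^{-1}e^{\tau(1/\bar T-\re(1/t))}$, integrable for $t\in\nev{\bar T}$, so dominated convergence gives that $f$ defined by (\ref{barbara6}) is analytic on $\nev{T}=\bigcup_{\bar T<T}\nev{\bar T}$. Set $a_n:=\hat f^{(n)}(0)$ (legitimate since $\{|\tau|<\kappa\}\subset\cylindre{\kappa}$). Using $\int_0^{+\infty}\tau^n e^{-\tau/t}\frac{d\tau}{t}=n!\,t^n$ exactly, one obtains $R_r(t)=\int_0^{+\infty}\bigl(\hat f(\tau)-\sum_{n<r}\frac{a_n}{n!}\tau^n\bigr)e^{-\tau/t}\frac{d\tau}{t}$, the integrand being $\hat f$ minus a polynomial, hence analytic on $\cylindre{\kappa}$. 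To get the uniform bound $|R_r(t)|\le K\,r!/\bar\kappa^r|t|^r$ on $\nev{\bar T}$ I would deform $[0,+\infty)$, inside $\cylindre{\bar\kappa}$, to a contour leaving $0$ in a direction close to $\arg t$ (so $\re(\tau/t)\approx|\tau|/|t|$ near $0$) up to distance of order $\bar\kappa$, then running parallel to $\mathbbm{R}^+$; estimating the integrand near $0$ by the Taylor remainder of $\hat f$ on a circle of radius $<\kappa$ (which produces the factor $r!/\bar\kappa^r|t|^r$ after integration against $e^{-|\tau|/|t|}$) and further out by (\ref{barbara2}) yields the claim. This contour rotation is a milder version of the obstacle in the previous paragraph.

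Finally, the inversion. If $\hat f$ verifies $\mathcal Q_{\kappa,T}$ and $f$ is its Laplace transform, then $f$ verifies $\mathcal P_{\kappa,T}$ with asymptotic coefficients $\hat f^{(n)}(0)$, so the Borel transform of $f$ has the same Taylor series at $0$ as $\hat f$ and therefore coincides with $\hat f$ on $\{|\tau|<\kappa\}$, hence on $\cylindre{\kappa}$ by analytic continuation; thus $\mathrm{Borel}\circ\mathrm{Laplace}=\mathrm{id}$. Conversely $\mathrm{Laplace}\circ\mathrm{Borel}=\mathrm{id}$ was obtained as the by-product in the second paragraph, so the two transforms are inverse to each other.
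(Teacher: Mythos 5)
The paper does not prove Theorem \ref{theorembarbara0}: it is quoted as the Nevanlinna--Sokal theorem with reference [So], so the only possible comparison is with that classical argument, which your outline essentially follows. Your first bullet (coefficient bounds, truncated inverse-Laplace representation in the variable $\zeta=1/t$, tilted contours, optimisation over $N$) is the standard Nevanlinna--Sokal proof in structure, although you leave precisely the uniform contour choice --- which you yourself flag as ``where essentially all the difficulty lies'' --- as a plan rather than a proof.

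The genuine gap is in the second bullet, in the estimate of $R_r$ on $\nev{\bar T}$. The integrand in your remainder formula is $\hat f(\tau)-\sum_{n<r}\frac{a_n}{n!}\tau^n$, and on the far, horizontal part of your bent contour you propose to bound it ``by (\ref{barbara2})''; but (\ref{barbara2}) controls only $\hat f$, not the subtracted Taylor polynomial, and the polynomial cannot be discarded. If you bound it by absolute values, its top term contributes roughly $\frac{|a_{r-1}|}{(r-1)!}\,\frac{e^{-c\bar\kappa/|t|}}{|t|}\int_0^{\infty}(\bar\kappa+s)^{r-1}e^{-s\,\re(1/t)}\,ds$; near the boundary circle of $\nev{\bar T}$ one only has $\re(1/t)>1/\bar T$ while $|t|$ is small and $\arg t$ is close to $\pm\pi/2$, so the $s$-integral produces a factor of order $(r-1)!\,\bar T^{\,r}$ with no compensating power of $|t|$, and the only way to manufacture $|t|^r$ is $e^{-c\bar\kappa/|t|}\leqslant C\,r!\,(|t|/(c\bar\kappa))^r$, which costs a second factorial. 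The bound you obtain is therefore of size $(r!)^2$ rather than $K\,r!\,\bar\kappa^{-r}|t|^r$ (test it at $|t|\approx\bar\kappa/r$ with $\re(1/t)\approx 1/\bar T$: your estimate exceeds the required one by an unbounded factor), so ``yields the claim'' does not follow. The classical repair is not more careful contour bending but a different representation: write the Taylor remainder of $\hat f$ in integral form and exchange the two integrals, which gives exactly $R_r(t)=t^{\,r-1}\int_0^{+\infty}\hat f^{(r)}(\sigma)e^{-\sigma/t}\,d\sigma$; Cauchy's inequality on discs of radius $\bar\kappa<\kappa$ centred on $[0,+\infty)$ (which stay in $\cylindre{\kappa}$) gives $|\hat f^{(r)}(\sigma)|\leqslant K\,r!\,\bar\kappa^{-r}e^{(\sigma+\bar\kappa)/\bar T'}$ with $\bar T<\bar T'<T$, the $\sigma$-integral then yields a single factor $(\re(1/t)-1/\bar T')^{-1}$, and the one leftover factor (bounded by a constant times $\bar T/\cos(\arg t)\leqslant \bar T^2/|t|$ on $\nev{\bar T}$) is absorbed by applying the estimate at order $r+1$ and using $R_r=a_rt^r+R_{r+1}$. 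With that replacement your second bullet, and hence the inversion statement, is sound; as written, the far-region estimate is a step that would fail.
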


\begin{remark}
  \label{remarkbarbara2}If $\mathcal{P}_{\kappa, T}$ holds, the knowledge of
  the coefficients $a_0, a_1, \ldots$ allows one to recover $f$: $f$ is the
  Laplace transform of its Borel transform which only depends on $a_0, a_1,
  \ldots$ by (\ref{barbara4}). The shape of the domain $\nev{T}$ is crucial.
  For instance, the conclusion of this remark may fail if $\nev{T}$ is
  replaced by a truncated cone like $C_{T, \alpha} \assign \{t = re^{i \theta}
  \in \mathbbm{C} || \theta | < \alpha, r < T\}$ with $T > 0$, $\alpha <
  \frac{\pi}{2}$.
  
  Under our assumptions on $V$, the conjugate heat kernel will verify at least
  $\mathcal{P}_{\kappa, T}$ for some $\kappa > 0$ and $T > 0$. We do not prove
  a resummation estimate on a truncated cone $C_{T, \alpha}$ with $T > 0$,
  $\alpha > \frac{\pi}{2}$ and in general, the conjugate heat kernel will not
  verify the assumptions of Watson's theorem (cf. [So]).
\end{remark}

\begin{definition}
  \label{definitionbarbara2.5}Let $\tilde{a}_1, \ldots, \tilde{a}_r, \ldots
  \in F$. One says that the formal power series $\tilde{f} = \sum_{r \geqslant
  0} \tilde{a}_r t^r$ is Borel summable if there exist $\kappa, T > 0$ and a
  function $f$ satisfying $\mathcal{P}_{\kappa, T}$ such that, for every $r
  \geqslant 0$, $a_r = \tilde{a}_r$. f is called the Borel sum of $\tilde{f}$.
\end{definition}

\section{Main results}

The following theorem concernes the free case. In this case, it is possible to
give precise properties of the Borel transform of the conjugate heat kernel.
In particular, this Borel transform is analytic on the complex plane and is
exponentially dominated by the square root of the Borel variable on parabolic
domains which are symmetric with respect to the positive real axis.

\begin{theorem}
  \label{theorembarbara1}Let $\varepsilon > 0$. Let $\mu$ be a measure on
  $\mathbbm{R}^{\nu}$ with values in a complex finite dimensional space of
  square matrices verifying
  \[ \int_{\mathbbm{R}^{\nu}} \exp (\varepsilon \xi^2) d| \mu | (\xi) < \infty
     . \]
  Let $c (x) = \int \exp (ix \cdot \xi) d \mu (\xi)$ and let $u$ be the
  solution of
  \begin{equation}
    \label{barbara10} \left\{ \begin{array}{l}
      \partial_t u \text{$= \partial^2_x u$} + c (x) u\\
      \\
      u_{} |_{t = 0^+} =_{} \delta_{x = y} \mathbbm{1}
    \end{array} \right.
  \end{equation}
  Let $v$ be defined by u=$(4 \pi t)^{- \nu / 2} e^{- \frac{(x - y)^2}{4 t}}
  v$. Then $v$ admits a Borel transform $\hat{v}$ (with respect to $t$) which
  is analytic on $\mathbbm{C}^{1 + 2 \nu}$. Let $\kappa, R > 0$ and let
  \[ C \assign 2 \lp{2} \int_{\mathbbm{R}^{\nu}} \exp \lp{1} \frac{2
     \kappa}{\varepsilon} + \frac{\varepsilon}{2} \xi^2 + R| \xi | \rp{1} d|
     \mu | (\xi) \srp{2}{1 / 2}{} . \]
  Then, for every $(\tau, x, y) \in \parabole{\kappa} \times \mathbbm{C}^{2
  \nu}$ such that $| \mathcal{I} mx| < R$ and $| \mathcal{I} my| < R$,
  \begin{equation}
    \label{barbara12} | \hat{v} (\tau, x, y) | \leqslant \exp \lp{1} C| \tau
    |^{1 / 2} \rp{1} .
  \end{equation}
\end{theorem}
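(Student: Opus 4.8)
The plan is to construct $\hat v$ explicitly, as the termwise Borel transform of the Dyson series obtained by expanding $u$ around the free heat kernel and factoring that kernel out, and then to estimate it term by term; the crux is a Bessel-function bound that converts the Gaussian factor of that expansion into the required $|\tau|^{1/2}$ growth.

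First I would expand $u$ around $p^0_t(x,y):=(4\pi t)^{-\nu/2}e^{-(x-y)^2/(4t)}$ by Duhamel's formula, iterate, divide by $p^0_t$, and perform the (Gaussian) integrations over the intermediate positions; this is exactly the free case of the deformation formula of Section~4.2 (cf.\ [It, Formula~(77)]). Writing $c(x)=\int e^{ix\cdot\xi}\,d\mu(\xi)$, one obtains, for $\re t>0$ and $x,y\in\mathbbm{C}^{\nu}$,
\[
  v=\sum_{n\geqslant 0}v_n,\qquad v_n(t,x,y)=t^n\int_{0<\sigma_1<\cdots<\sigma_n<1}\int_{(\mathbbm{R}^{\nu})^n} e^{iL_\sigma(x,y)\cdot\xi}\,e^{-tQ_\sigma(\xi)}\,d\mu^{\otimes n}(\xi)\,d\sigma ,
\]
where $L_\sigma(x,y)\cdot\xi=\sum_j\bigl((1-\sigma_j)y+\sigma_j x\bigr)\cdot\xi_j$ and $Q_\sigma(\xi)=\sum_{j,k}\bigl(\min(\sigma_j,\sigma_k)-\sigma_j\sigma_k\bigr)\xi_j\cdot\xi_k$ is the quadratic form of the Brownian bridge on $[0,1]$. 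From $Q_\sigma(\xi)=\sum_{m=1}^{\nu}\int_0^1|\sum_j\xi_{j,m}(\mathbbm{1}_{u<\sigma_j}-\sigma_j)|^2\,du$ one reads off $0\leqslant Q_\sigma(\xi)\leqslant(\sum_j|\xi_j|)^2$; hence $e^{-tQ_\sigma}$ is harmless for $\re t>0$, and since $\int e^{a|\xi|}\,d|\mu|<\infty$ for every $a$ (use $a|\xi|\leqslant\frac{\varepsilon}{2}\xi^2+\frac{a^2}{2\varepsilon}$ together with the hypothesis $\int e^{\varepsilon\xi^2}\,d|\mu|<\infty$), the series converges locally uniformly for $\re t>0$, $x,y\in\mathbbm{C}^{\nu}$.

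Next I would Borel-transform in $t$ termwise. Expanding $e^{-tQ}$ and using $(n+m)!\geqslant n!\,m!$, the Borel transform of $t^ne^{-tQ}$ is the entire function $\beta_n(Q,\tau):=\tau^n\sum_{m\geqslant0}\frac{(-Q\tau)^m}{m!\,(n+m)!}$, which for $Q>0$ equals $(\tau/Q)^{n/2}J_n\bigl(2(Q\tau)^{1/2}\bigr)$; thus $\hat v_n$ arises from $v_n$ by replacing $t^ne^{-tQ_\sigma(\xi)}$ with $\beta_n\bigl(Q_\sigma(\xi),\tau\bigr)$. The key estimate is the classical bound $|J_n(z)|\leqslant\frac{|z/2|^n}{n!}e^{|\im z|}$ (Poisson's integral representation of $J_n$, and $|e^{izt}|\leqslant e^{|\im z|}$ for $|t|\leqslant1$), i.e.\ $|\beta_n(Q,\tau)|\leqslant\frac{|\tau|^n}{n!}e^{2Q^{1/2}|\im\tau^{1/2}|}$. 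This is exactly where the Gaussian ``width'' becomes a square root: on $\parabole{\kappa}$ one has $|\im\tau^{1/2}|^2<\kappa$, so $e^{2Q_\sigma(\xi)^{1/2}|\im\tau^{1/2}|}\leqslant e^{2\sqrt{\kappa}\sum_j|\xi_j|}$, a bound independent of $\tau$. Combined with $|e^{iL_\sigma(x,y)\cdot\xi}|\leqslant e^{R\sum_j|\xi_j|}$ when $|\im x|<R$, $|\im y|<R$ (since $L_\sigma(x,y)$ is a convex combination of $x,y$ and $\xi_j\in\mathbbm{R}^{\nu}$), and with $\int_{0<\sigma_1<\cdots<\sigma_n<1}d\sigma=1/n!$, this gives, for $\tau\in\parabole{\kappa}$,
\[
  |\hat v_n(\tau,x,y)|\leqslant\frac{|\tau|^n}{(n!)^2}\,N^n,\qquad N:=\int_{\mathbbm{R}^{\nu}}e^{(R+2\sqrt{\kappa})|\xi|}\,d|\mu|(\xi)<\infty .
\]

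Finally I would sum over $n$. Each $\hat v_n$ is entire on $\mathbbm{C}^{1+2\nu}$ (an integral of an entire integrand, dominated uniformly on compacts by an integrable function via the $\varepsilon$-Gaussian tail of $|\mu|$), and by the last bound $\sum_n\hat v_n$ converges locally uniformly, so $\hat v:=\sum_n\hat v_n$ is entire; reordering the resulting double series identifies $\hat v$ with $\sum_{r\geqslant0}\frac{a_r(x,y)}{r!}\tau^r$, i.e.\ with the Borel transform of $v$. For $\tau\in\parabole{\kappa}$ and $|\im x|,|\im y|<R$ one then has $|\hat v(\tau,x,y)|\leqslant\sum_n\frac{(|\tau|N)^n}{(n!)^2}=I_0\bigl(2\sqrt{|\tau|N}\bigr)\leqslant e^{2\sqrt{N}\,|\tau|^{1/2}}$; and $(R+2\sqrt{\kappa})|\xi|\leqslant R|\xi|+\frac{2\kappa}{\varepsilon}+\frac{\varepsilon}{2}\xi^2$ (arithmetic--geometric inequality: $2\sqrt{\kappa}|\xi|=2\sqrt{\frac{2\kappa}{\varepsilon}\cdot\frac{\varepsilon}{2}\xi^2}\leqslant\frac{2\kappa}{\varepsilon}+\frac{\varepsilon}{2}\xi^2$) shows $2\sqrt{N}\leqslant C$, which is~(\ref{barbara12}). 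The one real obstacle is getting $|\tau|^{1/2}$ and not $|\tau|$ here: estimating $e^{-tQ_\sigma}$ crudely after the Borel transform --- replacing it by $e^{|t|Q_\sigma}$, or taking absolute values inside the $m$-sum, which turns $J_n$ into the modified Bessel function $I_n$ --- would force one to integrate $e^{c|\tau|\,|\xi|}$ against $d|\mu|$ and thus produce an unwanted factor $e^{c|\tau|}$; the estimate closes only because $|J_n(z)|$ is controlled by $e^{|\im z|}$, so that the Gaussian decay of $\mu$ is tested against $|\im\tau^{1/2}|$, which is bounded on $\parabole{\kappa}$, rather than against $|\tau|$, which is not. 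Secondary care is needed for the bookkeeping of the deformation formula (the precise shape of $L_\sigma$ and $Q_\sigma$, and the branch-free meaning of $\beta_n(Q,\tau)$ at $Q=0$) and for the interchanges of summation and integration, all controlled by the same exponential estimates.
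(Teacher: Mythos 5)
Your proposal is correct and follows essentially the same route as the paper: the free deformation formula with the Brownian-bridge quadratic form (the paper's $\bar{\Omega}\cdot\xi\otimes_n\xi$), termwise Borel transform of $t^n e^{-tQ}$ via the Bessel kernel with the bound $\frac{|\tau|^n}{n!}\exp\lp{1}2\sqrt{Q}\,|\mathcal{I}m\,\tau^{1/2}|\rp{1}$, the parabola $\parabole{\kappa}$ plus the arithmetic--geometric inequality to exploit the Gaussian integrability of $|\mu|$, and summation to $J_0/I_0$ yielding $\exp(C|\tau|^{1/2})$. The only cosmetic differences are your use of Poisson's integral for $J_n$ (the paper uses the $J_0$-based representation of $K_n$) and your identification of $\hat{v}$ with the Borel transform by series reordering, where the paper instead verifies directly that $v$ is the Laplace transform of $\hat{v}$.
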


\begin{remark}
  \label{remarkbarbara4}$\parabole{\kappa}$ is the interior of a parabola
  which contains {\cylindre{{\kappa}}} (fig. \ref{parabole}). Let $T > 0$.
  Estimate (\ref{barbara12}) is better than (\ref{barbara2}). Then $v (t, x,
  y)$ verifies $\mathcal{P}_{\kappa, T}$: the small time expansion of the
  conjugate heat kernel is Borel summable and its Borel sum is equal to $v$.
\end{remark}

The following corollary deals with the partition function on the torus and is
a consequence of the Theorem \ref{theorembarbara1}.

\begin{corollary}
  \label{corollairebarbara3}Let $\varepsilon > 0$ and $d \in
  \mathbbm{N}^{\ast}$. For each $q \in \mathbbm{Z}^{\nu}$, let $c_q$ be a
  square matrix acting on $\mathbbm{C}^d$. Assume that $c_{- q} = c^{\ast}_q$
  and
  \[ \sum_{q \in \mathbbm{Z}^{\nu}} e^{4 \pi^2 \varepsilon q^2} |c_q | <
     \infty . \]
  Let $c (x) \assign \sum_{q \in \mathbbm{Z}^{\nu}} c_q e^{2 i \pi q.x}$. Let
  $\lambda_1 \leqslant \lambda_2 \leqslant \cdots$ be the eigenvalues of the
  operator $H \assign - \partial^2_x - c (x)$ acting on $\mathbbm{C}^d$-valued
  functions defined on the torus $( \mathbbm{R} / \mathbbm{Z})^{\nu}$. For
  each $q \in \mathbbm{Z}^{\nu}$, there is a function $\hat{w} (q, .)$
  analytic on $\mathbbm{C}$ satisfying
  \begin{itemize}
    \item For every $\kappa > 0$, there exist constants $C_1 > 0$ and $C_2 >
    0$ such that, for every $\text{$q \in \mathbbm{Z}^{\nu}$ and $\tau \in
    \parabole{\kappa}$},$
    \begin{equation}
      \label{barbara16b} \text{$| \hat{w} (q, \tau) | \leqslant C_1 \exp
      \lp{1} C_2 | \tau |^{1 / 2} \rp{1}$} .
    \end{equation}
    \item For every $t \in \demiplan{}$
    \begin{equation}
      \label{barbara16a} \sum_{n = 1}^{+ \infty} e^{- \lambda_n t} = (4 \pi
      t)^{- \nu / 2} \sum_{q \in \mathbbm{Z}^{\nu}} e^{- \frac{q^2}{4 t}}
      \int_0^{+ \infty} e^{- \frac{\tau}{t}} \hat{w} (q, \tau) \frac{d
      \tau}{t} .
    \end{equation}
  \end{itemize}
\end{corollary}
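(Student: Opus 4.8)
The plan is to deduce the corollary from Theorem \ref{theorembarbara1} by viewing $c$ as a $\mathbbm{Z}^{\nu}$-periodic potential on $\mathbbm{R}^{\nu}$ and periodizing (the method of images). First I write $c(x) = \int_{\mathbbm{R}^{\nu}} \exp(ix \cdot \xi)\, d\mu(\xi)$ on $\mathbbm{R}^{\nu}$ with $\mu \assign \sum_{q \in \mathbbm{Z}^{\nu}} c_q\, \delta_{2\pi q}$, the matrix-valued measure carried by $2\pi\mathbbm{Z}^{\nu}$. Then $d|\mu| = \sum_q |c_q|\, \delta_{2\pi q}$ and $\int \exp(\varepsilon\xi^2)\, d|\mu|(\xi) = \sum_q |c_q|\, e^{4\pi^2\varepsilon q^2} < \infty$, so Theorem \ref{theorembarbara1} applies with this $\mu$ and this $\varepsilon$ (and $\alpha = 0$, since here $V = -c$). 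Let $u$, $v$, $\hat v$ be the objects it produces: $u$ solves (\ref{barbara10}) on $\mathbbm{R}^{\nu}$, $u = (4\pi t)^{-\nu/2} e^{-(x-y)^2/(4t)} v$, $\hat v$ is entire on $\mathbbm{C}^{1+2\nu}$, and, combining (\ref{barbara12}) and Remark \ref{remarkbarbara4} with Theorem \ref{theorembarbara0} and $\demiplan{} = \bigcup_{T>0} \nev{T}$, one has $v(t,x,y) = \int_0^{+\infty} e^{-\tau/t} \hat v(\tau,x,y)\, \frac{d\tau}{t}$ for every $t \in \demiplan{}$. Since $c$ is bounded and $\alpha = 0$, $u(t, \cdot, \cdot)$ is the (unique) heat kernel on $\mathbbm{R}^{\nu}$ of $-\partial_x^2 - c$.

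Next I periodize: set $P(t,x,y) \assign \sum_{q \in \mathbbm{Z}^{\nu}} u(t,x,y+q)$. For $x, y$ real one has $(x-y-q)^2 = |x-y-q|^2$, hence $|u(t,x,y+q)| \leq (4\pi|t|)^{-\nu/2}\, e^{-|x-y-q|^2 \re(1/t)/4}\, \sup_{q'} |v(t,x,y+q')|$. The supremum is finite and independent of the lattice point, because for real arguments the bound (\ref{barbara12}) holds with $R$ arbitrary (say $R=1$) and with a constant depending only on $\kappa$: indeed $\int \exp(\tfrac{2\kappa}{\varepsilon} + \tfrac{\varepsilon}{2}\xi^2 + R|\xi|)\, d|\mu|(\xi) = e^{2\kappa/\varepsilon} \sum_q |c_q|\, e^{2\pi^2\varepsilon q^2 + 2\pi R|q|}$ is dominated by a multiple of $\sum_q |c_q| e^{4\pi^2\varepsilon q^2} < \infty$, so the Laplace integral defining $v$ is bounded uniformly in $q'$, while $|x-y-q|^2 \re(1/t)/4 \to +\infty$. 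Thus the series for $P$, and for its first $x$- and $t$-derivatives, converge locally uniformly on $\demiplan{} \times \mathbbm{R}^{2\nu}$; $P$ is analytic in $t$, $\mathbbm{Z}^{\nu}$-periodic in $x$ and in $y$, solves $\partial_t P = \partial_x^2 P + c(x) P$, and $P(t, \cdot, y) \to \sum_q \delta_{x = y+q}\mathbbm{1}$ as $t \to 0^+$. By uniqueness of the heat semigroup of $H = -\partial_x^2 - c(x)$ on $(\mathbbm{R}/\mathbbm{Z})^{\nu}$ (a bounded perturbation of $-\partial_x^2$), $P$ is its heat kernel; it is trace class for $\re t > 0$, so $\sum_{n \geq 1} e^{-\lambda_n t} = \int_{[0,1]^{\nu}} \tmop{tr}\, P(t,x,x)\, dx = \int_{[0,1]^{\nu}} \sum_q \tmop{tr}\, u(t,x,x+q)\, dx$, the trace being over $\mathbbm{C}^d$.

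Finally I insert the explicit form. Since $(x - (x+q))^2 = q^2$, $u(t,x,x+q) = (4\pi t)^{-\nu/2} e^{-q^2/(4t)} v(t,x,x+q)$ and $v(t,x,x+q) = \int_0^{+\infty} e^{-\tau/t} \hat v(\tau,x,x+q)\, \frac{d\tau}{t}$. Define $\hat w(q,\tau) \assign \int_{[0,1]^{\nu}} \tmop{tr}\, \hat v(\tau,x,x+q)\, dx$. Because $\tau \mapsto \hat v(\tau,x,x+q)$ is entire and locally bounded uniformly in $x \in [0,1]^{\nu}$, $\hat w(q, \cdot)$ is entire on $\mathbbm{C}$; and for $\tau \in \parabole{\kappa}$ the uniform bound above gives $|\hat w(q,\tau)| \leq C_1 \exp(C_2|\tau|^{1/2})$ with $C_1$ depending only on $d$ and $C_2 = C_2(\kappa)$ independent of $q$, which is (\ref{barbara16b}). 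Integrating the identity for $\sum_n e^{-\lambda_n t}$ over $x \in [0,1]^{\nu}$ and interchanging $\sum_q$, $\int_{[0,1]^{\nu}} dx$ and $\int_0^{+\infty} d\tau$ — legitimate since $|e^{-q^2/(4t)}| = e^{-q^2\re(1/t)/4}$ gives summable Gaussian decay in $q$ and $\frac{1}{|t|} e^{-\tau\re(1/t)} C_1 e^{C_2\tau^{1/2}}$ is a $q$-independent integrable majorant in $\tau$ — yields (\ref{barbara16a}) for every $t \in \demiplan{}$.

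The only genuinely delicate point is the periodization step: turning the informal remark that the torus case is obtained ``by putting together independent expansions'' into the precise identity $\sum_n e^{-\lambda_n t} = \int_{[0,1]^{\nu}} \sum_q \tmop{tr}\, u(t,x,x+q)\, dx$ for complex $t$, i.e.\ identifying $\sum_q u(t,x,y+q)$ with the torus heat kernel and integrating its diagonal term by term. Everything there rests on the uniform-in-$q$ Gaussian decay of $u(t,x,x+q)$ for real $x$, which in turn comes from controlling the constant in the sub-exponential bound (\ref{barbara12}) on $\hat v$ independently of the lattice point; the remaining steps are substitution and an application of Fubini's theorem.
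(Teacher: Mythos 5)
Your proposal is correct and follows essentially the same route as the paper: take $\mu = \sum_{q} c_q \delta_{\xi = 2\pi q}$, apply Theorem \ref{theorembarbara1}, realize the torus heat kernel as the periodization $\sum_q u(t,x,y+q)$ of the $\mathbbm{R}^{\nu}$ kernel (the paper writes this as (\ref{carmina30})--(\ref{carmina31})), integrate the diagonal trace over $[0,1]^{\nu}$, and set $\hat{w}(q,\tau) = \int_{[0,1]^{\nu}} \tmop{Tr}\,\hat{v}(\tau,x,x+q)\,dx$, with the uniformity in $q$ coming exactly as you say from the fact that the constant in (\ref{barbara12}) depends only on $|\im x|,|\im y|$, which a real lattice translation does not change. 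You merely spell out in more detail the identification of the periodized kernel with the torus heat kernel and the Fubini interchanges, which the paper leaves implicit.
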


\begin{remark}
  By the argument of Remark \ref{remarkbarbara4}, Corollary
  \ref{corollairebarbara3} implies the following result. Let $c$ as in
  Corollary \ref{corollairebarbara3}. For each $q \in \mathbbm{Z}^{\nu}$,
  there are numbers $a_{1, q}, a_{2, q}, \ldots \in \mathbbm{C}$ and functions
  $R_{0, q}, R_{1, q}, \ldots \in \text{$\mathcal{A} ( \demiplan{})$}$ such
  that, for every $r \geqslant 0$ and $t \in \demiplan{}$,
  \begin{equation}
    \label{barbara18} \sum_{n = 1}^{+ \infty} e^{- \lambda_n t} = (4 \pi t)^{-
    \nu / 2} \sum_{q \in \mathbbm{Z}^{\nu}} e^{- \frac{q^2}{4 t}} \lp{1} d +
    a_{1, q} t + \cdots + a_{r - 1, q} t^{r - 1} + R_{r, q} (t) \rp{1},
  \end{equation}
  and for each $T, \kappa > 0$, there exist $K > 0$ such that
  \[ |R_{r, q} (t) | \leqslant K \frac{r!}{\kappa^r} |t|^r, \]
  for every $r \geqslant 0, q \in \mathbbm{Z}^{\nu}, t \in \nev{T} .$
  
  The index $q$ in (\ref{barbara18}) can be viewed as labeling closed
  classical trajectories (geodesics) on the torus. Then $q^2$ denotes the
  length of such a geodesic. All coefficients of the expansion \ref{barbara18}
  have classical (or geometric) interpretation.
\end{remark}

The following theorem deals with the harmonic case (i.e. $V (x) = \alpha x^2 -
c (x)$ with $\alpha \in \mathbbm{R}$) and gives a statement about the
expansion of the conjugate heat kernel which provides Borel summability. But
we do not obtain a statement as precise as in Theorem \ref{theorembarbara1}
about its Borel transform (the proof is established without working in the
Borel plane).

\begin{theorem}
  \label{theorembarbara3}Let $\omega \in \mathbbm{R} \cup i \mathbbm{R}$. Let
  $c$ be as in Theorem \ref{theorembarbara1}. Let u be the solution of
  \begin{equation}
    \label{barbara20} \left\{ \begin{array}{l}
      \partial_t \text{$u = \lp{1} \partial^2_x - \frac{\omega^2}{4} x^2
      \rp{1} u + c (x) u$}\\
      \\
      u_{} |_{t = 0^+} =_{} \delta_{x = y} \mathbbm{1}
    \end{array} \right.
  \end{equation}
  
  \begin{itemizedot}
    \item Then there are a number $T > 0$, functions $a_1, a_2, \ldots \in
    \mathcal{A} ( \mathbbm{C}^{2 \nu})$ and $R_0, R_1, \ldots \in
    \text{$\mathcal{A} ( \demidisque{T} \times \mathbbm{C}^{2 \nu})$}$ such
    that
    \begin{equation}
      \label{barbara21} u = (4 \pi t)^{- \nu / 2} e^{- \frac{(x - y)^2}{4 t}}
      \lp{1} \mathbbm{1} + a_1 (x, y) t + \cdots + a_{r - 1} (x, y) t^{r - 1}
      + R_r (t, x, y) \rp{1},
    \end{equation}
    for every $r \geqslant 0$, $t \in \demidisque{T}$ and $(x, y) \in
    \mathbbm{C}^{2 \nu}$.
    
    \item And for each $R > 0$, there exist $K > 0$ and $\kappa > 0$ such
    that,
    \begin{equation}
      \text{\label{barbara22}} |R_r (t, x, y) | \leqslant K
      \frac{r!}{\kappa^r} |t|^r,
    \end{equation}
    for every $r \geqslant 0$, $t \in \demidisque{T}$, $(x, y) \in
    \mathbbm{C}^{2 \nu}$, $|x| < R$, $|y| < R$.
  \end{itemizedot}
\end{theorem}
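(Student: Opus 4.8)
The plan is to deduce Theorem~\ref{theorembarbara3} from the deformation formula of Section~4.2 by time--domain estimates only, never passing to the Borel plane (which, unlike in Theorem~\ref{theorembarbara1}, is not entire here). Applied to~(\ref{barbara20}), that formula produces a solution $u=(4\pi t)^{-\nu/2}e^{-(x-y)^2/(4t)}v$ with
\[
v(t,x,y)=m_\omega(t,x,y)\sum_{n\geqslant 0}v_n(t,x,y),\qquad
v_n=\frac{t^{\,n}}{n!}\int_{[0,1]^n}\!\int_{\mathbbm{R}^{n\nu}}\exp\Bigl(i\sum_{j}\ell_t(\sigma_j)\cdot\xi_j-\frac t2\sum_{j,k}g_t(\sigma_j,\sigma_k)\,\xi_j\cdot\xi_k\Bigr)\,d\mu^{\otimes n}(\xi)\,d\sigma,
\]
where $m_\omega$ is the Mehler kernel of $\partial_x^2-\frac{\omega^2}{4}x^2$ divided by $(4\pi t)^{-\nu/2}e^{-(x-y)^2/(4t)}$, while $\ell_t(\sigma)=\ell_t(\sigma;x,y)$ and $g_t(\sigma,\sigma')$ are the mean and the time--rescaled (scalar) covariance of the harmonic bridge from $y$ to $x$ in time $t$ --- the ``deformation data'', expressed through $\sinh(\omega t\,\cdot\,)$ and $\sinh\omega t$; here $v_0\equiv\mathbbm{1}$. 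After checking that this $u$ solves~(\ref{barbara20}), the theorem becomes an analyticity--plus--Gevrey statement about this series.

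The first technical input (Section~4.1) is a package of elementary estimates on the deformation data for small complex time. There is $T>0$ such that: $m_\omega(\cdot,x,y)$ is analytic on $\disque{T}$, equals $\mathbbm{1}$ at $t=0$, is bounded on $\disque{T}$ by a constant $C(R)$ when $|x|,|y|<R$, and has $t$--Taylor coefficients $O(\rho^{-k})$ for some $\rho>0$ (its only singularities sit at $\omega t\in i\pi\mathbbm{Z}\setminus\{0\}$); $t\mapsto g_t(\sigma,\sigma')$ is analytic on $\disque{T}$, bounded uniformly for $\sigma,\sigma'\in[0,1]$, with $g_0(\sigma,\sigma')=2(\min(\sigma,\sigma')-\sigma\sigma')\geqslant0$, while $t\mapsto\ell_t(\sigma;x,y)$ is analytic on $\disque{T}$, bounded by $C(R)$ for $\sigma\in[0,1]$, $|x|,|y|<R$, with $\ell_0(\sigma;x,y)=(1-\sigma)y+\sigma x$; and both $g_t,\ell_t$ are even in $t$ with real $t$--Taylor coefficients (valid whether $\omega$ is real or purely imaginary). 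Shrinking $T$, the positive semidefiniteness of $(g_0(\sigma_j,\sigma_k))_{j,k}$ together with $\re t>0$ and $|t|$ small forces the coercivity
\[
\re\Bigl(\frac t2\sum_{j,k}g_t(\sigma_j,\sigma_k)\,\xi_j\cdot\xi_k\Bigr)\geqslant-\frac{\varepsilon}{4}\sum_j|\xi_j|^2\qquad(\xi_1,\dots,\xi_n\in\mathbbm{R}^\nu),
\]
and this is exactly where the restriction to $\demidisque{T}$ originates. Granting this, and using $|e^{\,i\ell_t(\sigma_j)\cdot\xi_j}|\leqslant e^{C(R)|\xi_j|}$ since $\ell_t$ is bounded, hypothesis~(\ref{intro3}) gives $|v_n(t,x,y)|\leqslant\frac{|t|^{\,n}}{n!}M_R^{\,n}$ for $t\in\demidisque{T}$, $|x|,|y|<R$, with $M_R:=\int_{\mathbbm{R}^\nu}e^{\frac{\varepsilon}{4}\xi^2+C(R)|\xi|}\,d|\mu|(\xi)<\infty$. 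Hence $\sum_n v_n$ converges locally uniformly on $\demidisque{T}\times\mathbbm{C}^{2\nu}$, so $v\in\mathcal{A}(\demidisque{T}\times\mathbbm{C}^{2\nu})$; the small--time asymptotic expansion of each $v_n$ (hence of $v$), produced in the next step, has coefficients that are entire in $(x,y)$ --- these are the $a_k\in\mathcal{A}(\mathbbm{C}^{2\nu})$, with $a_0=\mathbbm{1}$ --- so that $R_r:=v-\sum_{k<r}a_kt^k\in\mathcal{A}(\demidisque{T}\times\mathbbm{C}^{2\nu})$, which gives~(\ref{barbara21}) and $|R_0|=|v|\leqslant K$.

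It remains to prove~(\ref{barbara22}). Fix $R$. Since $m_\omega$ has a convergent Taylor series with geometrically bounded coefficients, it suffices to expand $\sum_n v_n$ and bound its remainder after $r$ terms; the tail $\sum_{n\geqslant r}v_n$ is already $O(|t|^r/r!)$ by the bound above. For each $n<r$ I would Taylor--expand the integrand of $v_n$ in $t$ to order $r-n$ with integral remainder: the exponent is analytic in $t$ and at most quadratic in $\xi$, its $j$--th $t$--derivative on $\demidisque{T}$ being $O\bigl(j!\,\rho^{-j}(1+|\xi|)^{2}(1+R)\bigr)$, so by Fa\`a di Bruno the $(r-n)$--th derivative of the exponential equals the exponential times a polynomial of degree $\leqslant 2(r-n)$ in $(\xi_1,\dots,\xi_n)$ whose coefficients are $O\bigl(C(R)^{\,r-n}(r-n)!\bigr)$; the $1/(r-n)!$ of the Taylor remainder cancels this factorial, and the residual exponential is dominated by $e^{C(R)|\xi|+\frac{\varepsilon}{4}\xi^2}$ thanks to the coercivity. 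Integrating that polynomial against $d|\mu|^{\otimes n}$ via the Gaussian moment bound $\int_{\mathbbm{R}^\nu}|\xi|^{2m}e^{\frac{\varepsilon}{4}|\xi|^2}d|\mu|(\xi)\leqslant C\,m!\,B^{m}$ (itself a consequence of~(\ref{intro3}), with $C,B$ depending only on $\mu$), then expanding the monomials over $\xi_1,\dots,\xi_n$ and summing over $0\leqslant n<r$, one obtains $|R_r(t,x,y)|\leqslant K(R)\,\dfrac{r!}{\kappa^{\,r}}\,|t|^{\,r}$ on $\demidisque{T}$ for $|x|,|y|<R$, after multiplying back the analytic prefactor $m_\omega$ (which preserves the Gevrey--$1$ class).

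The main obstacle is the bookkeeping in that last step: three unrelated sources of factorials appear --- the Taylor--remainder $1/(r-n)!$, the Gaussian moments $m!$ coming from $\mu$, and the combinatorics of distributing a degree $\leqslant 2(r-n)$ among the $n$ variables $\xi_1,\dots,\xi_n$ together with the summation over $n$ --- and one must verify that after they interact exactly one factor $r!$ survives, i.e.\ that the expansion is genuinely Gevrey--$1$ and no worse (the growth of the heat--kernel coefficients $a_k$ stems precisely from this combinatorics, not from a crude estimate). A secondary, more routine difficulty is the coercivity of the rescaled covariance $g_t$ for complex $t\in\demidisque{T}$, which relies on $g_t$ being even in $t$ with real Taylor coefficients and on $(g_0(\sigma_j,\sigma_k))_{j,k}$ being positive semidefinite, so that $\re t>0$ with $|t|$ small keeps the quadratic form dissipative; this is what pins the statement to the region $\demidisque{T}$. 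One could instead try to transfer the sharper parabolic estimates of Theorem~\ref{theorembarbara1} by rewriting $v$ as a free--type object with a perturbed kernel, but the direct time--domain route above avoids having to control a Borel transform.
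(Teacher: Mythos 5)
Your overall architecture (deformation formula, coercivity of the quadratic form on $\demidisque{T}$, then a Gevrey-$1$ remainder estimate in the time domain) matches the paper's, but the two decisive steps are exactly the ones you leave unproved, and in one of them your sketched justification would fail. First, the coercivity. You claim that positive semidefiniteness of $\bigl(g_0(\sigma_j,\sigma_k)\bigr)_{j,k}$ together with $\re t>0$ and $|t|$ small ``forces'' $\re\bigl(\tfrac t2\sum_{j,k}g_t(\sigma_j,\sigma_k)\xi_j\cdot\xi_k\bigr)\geqslant-\tfrac{\varepsilon}{4}\sum_j|\xi_j|^2$ with a constant independent of $n$. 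The natural perturbative argument gives an error of the form $C|t|^3\,(\mu,\mu)_0$ with $(\mu,\mu)_0$ as large as $n\sum_j\xi_j^2$ (take all $s_j$ near $1/2$ and all $\xi_j$ equal), while the good term $\re t\,(\mu,\mu)_0$ can be negligible when $t$ is close to the imaginary axis; so no $n$-independent constant comes out this way, and your relaxed inequality is no easier than the exact one. The paper needs the full strength $\re\bigl(\Omega\cdot\xi\otimes_n\xi\bigr)\geqslant 0$ of (\ref{claudia30}), and obtains the uniformity in $n$, $(s_j)$, $(\xi_j)$ by a genuinely different mechanism: the resolvent bound (\ref{claudia10}) of Proposition \ref{propositionclaudia2} makes the normalized function $f(t)=t(\mu,\mu)_{\omega t}/(\mu,\mu)_0$ bounded on a fixed disk, it is odd with real Taylor coefficients, and Lemma \ref{lemmaclaudia2} (a univalence/normal-families argument) then yields a radius $T_d$ independent of all parameters on which $\re f>0$. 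You name the right symmetries but present this as routine; as sketched it is a gap.

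Second, the remainder bound (\ref{barbara22}). You Taylor-expand the integrand of $v_n$ in $t$ and invoke Fa\`a di Bruno, and you yourself flag as ``the main obstacle'' that three factorial sources (Taylor remainder, moments of $\mu$, combinatorics over $n$ and over the distribution of degrees among $\xi_1,\dots,\xi_n$) must conspire to leave a single $r!$ --- but you never verify this, and there is a real trap: bounding the Bell-polynomial contribution crudely (e.g.\ exponentiating the derivative bounds) produces factors like $e^{Cn\sum\xi_j^2}$, which are not integrable against $d|\mu|^{\otimes n}$ uniformly in $n$, and using raw moment bounds $\int|\xi|^{2m}d|\mu|\leqslant Cm!B^m$ risks a second factorial. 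The paper avoids this bookkeeping altogether: it never expands in $t$ (the $t$-dependence of $\Omega$ and $q_\omega^\natural$ is kept intact), but expands only $e^{w}$ in powers of $w=-\Omega\cdot\xi\otimes_n\xi$, uses (\ref{claudia32}) to convert $|w|^m$ into $(2n|t|)^m(\sum\xi_j^2)^m$, absorbs the $\xi$-powers into the admissible exponential weight via (\ref{annabelle10}) (the $1/m!$ exactly pays for the would-be moment factorial), controls $n^m/n!$ by $r^r\leqslant C^r r!$, and then converts the resulting decomposition $v=f_r+g_r$ (with $f_r$ analytic on a \emph{full} disk and bounded by $K_1\sigma_1^rr!$, $g_r$ of size $r!|t|^r$ on the half-disk) into the expansion (\ref{barbara21})--(\ref{barbara22}) by the elementary Cauchy/maximum-modulus Lemma \ref{sonialemme2}. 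Your route may well be completable, but as written the central estimate is asserted, not proved; if you want to keep the direct $t$-expansion you must carry out the partition/moment bookkeeping explicitly (keeping the $1/k!$ from the Bell polynomials to absorb $(\sum\xi_j^2)^k$ into $e^{\varepsilon\sum\xi_j^2}$, as in (\ref{annabelle10})), or else adopt the paper's two-step device, which is precisely engineered to make that combinatorics trivial.
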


\begin{remark}
  {\tmdummy}
  
  \begin{itemize}
    \item If $\omega \in \mathbbm{R}$, the heat kernel is uniquely defined as
    the kernel of an analytic semi-group. In the case $\omega \in
    i\mathbbm{R}$, one can also give a uniqueness statement (see [Ha7]) which
    allows one to speak about ``the'' heat kernel.
    
    \item By (\ref{barbara22}) and Theorem \ref{theorembarbara0}, the
    expansion in (\ref{barbara21}) admits a Borel transform satisfying
    $\mathcal{Q}_{\kappa, T}$. In the case $\omega = 0$, the estimate
    (\ref{barbara22}) is better than that obtained by Remark
    \ref{remarkbarbara4} since {\nev{T}}$\subset \demidisque{T}$ (fig.
    \ref{cercle}). In particular we get a uniform estimate when $\mathcal{R}
    et \rightarrow 0^+$ and $\im t$ is a non-vanishing constant. Hence, we
    obtain information about the Schr\"odinger kernel. In fact, this
    information is contained in Theorem \ref{theorembarbara1}: it is no
    difficult to see that Theorem \ref{theorembarbara1} implies the estimate
    (\ref{barbara22}) for $t \in \text{$\demidisque{T}$} \mathbbm{}$.
  \end{itemize}
\end{remark}

\section{\label{proof_theorem}Proof of the theorems}

The proof of our result uses an explicit formula of the heat kernel which has
two expressions (compare (\ref{maria3}) and (\ref{anna3})). This formula is
based on a matrix ($\Omega^{\natural}$ or $\Omega$) and a path
($q^{\natural}_{\omega}$ or $q_{\omega}$). There are two cases. In the free
case ($\omega = 0$), the matrix $\Omega$ is linear in $t$ and the path
$q_{\omega}$ does not depend on $t$. Consequently, the proof of our results is
simple and working in the Borel plane is natural. In the harmonic case
($\omega \neq 0$), we must deal with the $t$-dependence of the matrix and the
path. The following subsection is devoted to the study of this matrix (let us
emphasize that only Lemma \ref{lemmaclaudia1}, in this subsection, is useful
for the study of the free case).

\subsection{The deformation matrix}

Let $n \in \mathbbm{N}$, $\omega \in \mathbbm{C}$, $t \in] 0, + \infty [$ such
that $| \omega t| < \pi$ and let $\text{$(s_1, \ldots, s_n) \in [0, t]$}$ such
that $0 < s_1 < \cdots < s_n < t$. For $A \in \mathbbm{C}$, denote $\sh A
\assign \frac{1}{2} (e^A - e^{- A})$. In the following sections, the matrix
\begin{equation}
  \label{claudia0} \Omega^{\natural} \assign \text{$\left( \frac{\tmop{sh}
  (\omega s_{j \wedge k}) \tmop{sh} \lp{1} \omega (t - s_{j \vee k})
  \rp{1}}{\omega \tmop{sh} (\omega t)} \right)_{1 \leqslant j, k \leqslant
  n}$}
\end{equation}
plays an important role (cf. (\ref{anna6}) and (\ref{maria3})) and some of its
properties must be established. However, since we shall also consider complex
values of $t$, we study the following matrix. Assume now that $t \in
\mathbbm{C}$ . Let $\text{$(s_1, \ldots, s_n) \in [0, 1]$}$ such that $0 < s_1
< \cdots < s_n < 1$. Let $\Omega$ be defined by
\begin{equation}
  \label{claudia2} \Omega^{} : = \left( \frac{\tmop{sh} (\omega ts_{j \wedge
  k}) \tmop{sh} \lp{1} \omega t (1 - s_{j \vee k}) \rp{1}}{\omega \tmop{sh}
  (\omega t)} \right)_{1 \leqslant j, k \leqslant n} .
\end{equation}
The goal of this section is to study some properties of $\Omega$ (Proposition
\ref{propositionclaudia4}) in particular when $n$, the dimension of the
matrix, is large. This control, in large dimension, plays an important role in
the proof of the Borel summability of the heat kernel expansion. Note that the
connection of this matrix with a propagator (cf. (\ref{maria10})) seems to be
important for the understanding of its properties. The following lemma gives a
useful elementary property of $\bar{\Omega} \assign \frac{1}{t} \Omega
|_{\omega = 0} = \lp{1} s_{j \wedge k} (1 - s_{j \vee k}) \rp{1}_{1 \leqslant
j, k \leqslant n}$ and Proposition \ref{propositionclaudia4}, the goal of this
subsection, can be viewed as a generalization of this lemma. For $(\xi_1,
\ldots, \xi_n) \in \mathbbm{R}^{\nu n}$, let
\begin{equation}
  \label{claudia4} \bar{\Omega} \cdot \xi \otimes_n \xi : = \sum_{j, k = 1}^n
  s_{j \wedge k} (1 - s_{j \vee k}) \xi_j \cdot \xi_k .
\end{equation}
\begin{lemma}
  \label{lemmaclaudia1}For every $n \geqslant 1, (\xi_1, \ldots, \xi_n) \in
  \mathbbm{R}^{\nu n}$ and $(s_1, \ldots, s_n) \in [0, 1]$ such that $0 < s_1
  < \cdots < s_n < 1$
  \[ 0 \leqslant \bar{\Omega} \cdot \xi \otimes_n \xi \leqslant n \sum_{j =
     1}^n \xi_j^2 . \]
\end{lemma}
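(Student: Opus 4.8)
The quadratic form $\bar\Omega\cdot\xi\otimes_n\xi = \sum_{j,k} s_{j\wedge k}(1-s_{j\vee k})\,\xi_j\cdot\xi_k$ is, up to normalization, the covariance form of a Brownian bridge on $[0,1]$ sampled at the times $s_1<\cdots<s_n$: indeed $s_{j\wedge k}(1-s_{j\vee k}) = \mathbb{E}[b_{s_j}b_{s_k}]$ where $b$ is the standard scalar Brownian bridge. This immediately suggests the approach. For the lower bound I would exhibit $\bar\Omega$ as a Gram-type matrix. Write $s_{j\wedge k}(1-s_{j\vee k})$ for $j\le k$ as $s_j(1-s_k) = \int_0^1 (\mathbbm 1_{[0,s_j]}(u) - s_j)(\mathbbm 1_{[0,s_k]}(u) - s_k)\,du$ (the standard bridge-kernel identity, checked by direct integration). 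Hence, setting $f_j(u) := \mathbbm 1_{[0,s_j]}(u) - s_j \in L^2([0,1])$,
\[
\bar\Omega\cdot\xi\otimes_n\xi = \sum_{j,k=1}^n \xi_j\cdot\xi_k \int_0^1 f_j(u)f_k(u)\,du = \int_0^1 \Bigl|\sum_{j=1}^n \xi_j f_j(u)\Bigr|^2 du \;\geqslant\; 0,
\]
where $|\cdot|$ is the Euclidean norm on $\mathbbm R^\nu$ applied to the vector $\sum_j \xi_j f_j(u)\in\mathbbm R^\nu$. This gives the left inequality cleanly; the only thing to verify is the pointwise kernel identity, which is a one-line computation distinguishing $u<s_j$, $s_j<u<s_k$, $u>s_k$.

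For the upper bound, the crude estimate suffices. Since $0\le s_{j\wedge k}(1-s_{j\vee k})\le 1$ for all $j,k$ (each factor lies in $[0,1]$), by Cauchy–Schwarz applied entrywise,
\[
\Bigl|\sum_{j,k=1}^n s_{j\wedge k}(1-s_{j\vee k})\,\xi_j\cdot\xi_k\Bigr| \;\leqslant\; \sum_{j,k=1}^n |\xi_j|\,|\xi_k| \;=\; \Bigl(\sum_{j=1}^n |\xi_j|\Bigr)^2 \;\leqslant\; n\sum_{j=1}^n \xi_j^2,
\]
the last step being the standard inequality between the $\ell^1$ and $\ell^2$ norms on $\mathbbm R^n$ (i.e. $(\sum a_j)^2 \le n\sum a_j^2$ with $a_j=|\xi_j|$). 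Combining the two bounds yields the claim.

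There is essentially no obstacle here: the lemma is elementary and the two halves are independent. The only point requiring a moment of care is the kernel representation $s_j(1-s_k)=\int_0^1 f_jf_k$, which underlies the positivity; an alternative, if one prefers to avoid the integral, is to prove positivity of $\bar\Omega$ directly by induction on $n$ (or by noting $\bar\Omega = L D L^{\!*}$ after suitable elimination, using $0<s_1<\cdots<s_n<1$), but the $L^2$ representation is the shortest and also foreshadows the probabilistic interpretation of $\Omega$ mentioned in the text. For the later generalization (Proposition \ref{propositionclaudia4}) the same circle of ideas — writing $\Omega$ via a propagator/Green's function and reading off positivity — will be the relevant one, so I would present the $L^2$ argument explicitly.
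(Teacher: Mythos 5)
Your proof is correct. It is also essentially the paper's idea in a more self-contained form: the paper disposes of positivity by citing Remark \ref{remarkclaudia2}, which rests on the computation $(\mu,\mu)_0=\int_0^1 v\cdot d\mu=\int_0^1\bigl(\frac{dv}{ds}\bigr)^2$ for $v(s)=\sum_j s\wedge s_j(1-s\vee s_j)\,\xi_j$, i.e.\ on viewing the kernel $s\wedge s'(1-s\vee s')$ as the Dirichlet Green's function of $-\frac{d^2}{ds^2}$ and integrating by parts; your Gram factorization with $f_j=\mathbbm{1}_{[0,s_j]}-s_j$ is the same $L^2$ representation seen directly, since $v'(u)=\sum_j\xi_j f_j(u)$ a.e. What the paper's route buys is that it is exactly the framework needed anyway for Proposition \ref{propositionclaudia2} (where $z\neq 0$ and one needs the resolvent bound $|(\lambda+z^2)^{-1}|\leqslant\frac{\pi^2}{\pi^2-|z|^2}\lambda^{-1}$), so the lemma comes for free; what yours buys is a short, elementary verification independent of that machinery, plus an explicit argument for the upper bound, which the paper merely calls ``elementary'' — your entrywise bound $0\leqslant s_{j\wedge k}(1-s_{j\vee k})\leqslant 1$ combined with Cauchy--Schwarz and $\bigl(\sum_j|\xi_j|\bigr)^2\leqslant n\sum_j\xi_j^2$ is precisely the intended argument.
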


\begin{proof}
  The upper bound of the quantity $\bar{\Omega} \cdot \xi \otimes_n \xi$ is
  elementary and its positivity can be viewed as a consequence of Remark
  \ref{remarkclaudia2}.
\end{proof}

\begin{lemma}
  \label{lemmaclaudia2}Let $\tilde{T} > 0$ and $M > 0$. There exists $T > 0$
  satisfying the following property. Let $f$ be an analytic function $f$ on
  $\disque{\tilde{T}}$ verifying $f (0) = 0$, $f' (0) = 1$, $\sup_{t \in
  \text{$\disque{\tilde{T}}$}} \text{$|f (t) | \leqslant M$}$ and, for every
  $t \in \disque{\tilde{T}}$,
  \begin{equation}
    \label{claudia7.5} \mathcal{R} et = 0 \Rightarrow \mathcal{R} ef (t) = 0.
  \end{equation}
  Then, for $t \in \text{$\disque{T}$}$,
  \begin{equation}
    \label{claudia8} \mathcal{R} et > 0 \Rightarrow \mathcal{R} ef (t) > 0.
  \end{equation}
\end{lemma}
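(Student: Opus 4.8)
The plan is to write $f(t)=t+g(t)$, where $g$ is analytic on $\disque{\tilde T}$ with $g(0)=g'(0)=0$ and $|g(t)|\le M+\tilde T$ on $\disque{\tilde T}$. The hypothesis $(\ref{claudia7.5})$ says precisely that $\re g$ vanishes on the imaginary segment $\disque{\tilde T}\cap i\mathbbm{R}=\{iy:\ |y|<\tilde T\}$, since $\re(iy)=0$ and $\re g(iy)=\re f(iy)$. The idea is then elementary: for $t=x+iy$ with $x=\re t>0$ small, I control $\re f(t)=x+\re g(t)$ by integrating the horizontal derivative $\partial_\xi\bigl(\re g(\xi+iy)\bigr)=\re g'(\xi+iy)$ from $\xi=0$, where $\re g(iy)=0$, out to $\xi=x$; so everything reduces to a bound on $g'$ near the origin.

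First I record the Cauchy estimate. Since $g$ is analytic on $\disque{\tilde T}$, bounded there by $M+\tilde T$, and has a double zero at $0$, its Taylor coefficients satisfy $|c_n|\le(M+\tilde T)\tilde T^{-n}$ for $n\ge 2$; summing the series for $g'$ over $|t|\le\tilde T/2$ yields a constant $C=C(M,\tilde T)$ (one may take $C=6(M+\tilde T)\tilde T^{-2}$) with $|g'(t)|\le C|t|$ for all $|t|\le\tilde T/2$ — equivalently, apply Schwarz's lemma to $g'$ on a slightly smaller disk. Then I set $T\assign\min\bigl(\tilde T/2,\ 1/(2C)\bigr)$ and verify the conclusion: for $t=x+iy\in\disque{T}$ with $x>0$ and any $\xi\in[0,x]$ one has $|\xi+iy|^2=\xi^2+y^2\le x^2+y^2=|t|^2<T^2$, so $\xi+iy\in\disque{T}\subset\disque{\tilde T/2}$ and the bound on $g'$ applies along the whole segment. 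The fundamental theorem of calculus gives $\re g(t)=\int_0^x\re g'(\xi+iy)\,d\xi$, hence $|\re g(t)|\le\int_0^x|g'(\xi+iy)|\,d\xi\le CTx\le x/2$, and therefore $\re f(t)=x+\re g(t)\ge x/2>0$, which is $(\ref{claudia8})$.

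The only point deserving attention is the uniformity of $T$: it is independent of $f$ because all of the individual data of $f$ enter solely through the normalizations $f(0)=0$, $f'(0)=1$ and the uniform bound $M$, so $C$ — and hence $T$ — depends only on $M$ and $\tilde T$; this is exactly what the statement asserts. There is no genuine obstacle here; the structural fact making the computation work is that $(\ref{claudia7.5})$ pins $\re f$ to zero along the \emph{entire} imaginary segment, not merely through a Taylor condition at the origin, so a single integration in the $\re t$-direction dominates the error $\re g$ by $\tfrac12\re t$. (One could alternatively first observe, via the Schwarz reflection principle, that $\overline{f(-\bar t)}=-f(t)$, so that $\re f$ is odd in $\re t$ and $\re f(t)=x\bigl(1+O(|t|)\bigr)$ uniformly; but the direct integration above is shorter and avoids invoking reflection.)
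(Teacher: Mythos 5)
Your proof is correct, but it takes a genuinely different route from the paper's. The paper's own argument is topological: it chooses $T$ (depending only on $\tilde T$ and $M$) so small that \emph{every} analytic $f$ with $f(0)=0$, $f'(0)=1$, $|f|\leqslant M$ on $\disque{\tilde T}$ is one-to-one on $\disque{T}$ (a classical univalence-radius fact, asserted without proof), then notes $\re f(t)>0$ for small real $t>0$ and deduces (\ref{claudia8}) from injectivity combined with the vanishing of $\re f$ on the imaginary segment. You instead argue quantitatively: writing $f(t)=t+g(t)$, Cauchy estimates on the coefficients of $g$ (or Schwarz's lemma applied to $g'$) give $|g'(t)|\leqslant C|t|$ on $\disque{\tilde T/2}$ with $C=C(M,\tilde T)$ (your constant is right, since $\sum_{n\geqslant 2}n\,2^{2-n}=6$), and integrating $\re g'$ horizontally from the imaginary axis, where (\ref{claudia7.5}) forces $\re g$ to vanish, yields $|\re g(t)|\leqslant\frac{1}{2}\re t$ for $t\in\disque{T}$, $T=\min\bigl(\tilde T/2,\,1/(2C)\bigr)$, hence $\re f(t)\geqslant\frac{1}{2}\re t>0$; the integration segment indeed stays in $\disque{T}$ and $T$ visibly depends only on $M$ and $\tilde T$, which is the uniformity the statement (and its use in Proposition \ref{propositionclaudia4}) requires. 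Comparing the two: the paper's proof is shorter but leans on the unproved univalence radius and delivers only the sign of $\re f$; yours is self-contained and elementary, makes the dependence of $T$ on $(M,\tilde T)$ explicit, and gives the stronger quantitative conclusion $\re f(t)\geqslant\frac{1}{2}\re t$ on $\disque{T}$, which would serve equally well downstream.
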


\begin{proof}
  We can choose $T > 0$, depending \tmtextbf{only} on $\tilde{T}$ and $M$,
  such that every analytic function $f$ verifying $\text{$f (0) = 0$}$, $f'
  (0) = 1$ and $|f (t) | \leqslant M$ for $t \in \text{$\disque{\tilde{T}}$}$
  is a one-to-one analytic mapping on $\disque{T}$. For small $t > 0$,
  $\mathcal{R} ef (t) > 0$ since $\text{$f (0) = 0$}$ and $f' (0) = 1$. Then
  by (\ref{claudia7.5}) and since $f$ is a one-to-one analytic mapping, one
  gets (\ref{claudia8}).
\end{proof}

\begin{proposition}
  \label{propositionclaudia2}Let $\mathcal{E}$ be the space of Borel measures
  $\mu = \sum_{j = 1}^n \delta_{s_j} \xi_j$ such that $n \geqslant 1, \xi_j
  \in \mathbbm{R^{\nu}}, s_j \in] 0, 1 [$. Let $z \in \disque{\pi}$. We denote
  by $(., .)_z$ \ the following bilinear form on $\mathcal{E}$
  \[ \text{$(\mu_1, \mu_2)_z$} \assign \int_0^1 \int_0^1 \frac{\tmop{sh} (zs
     \wedge s') \tmop{sh} \lp{1} z (1 - s \vee s') \rp{1}}{z \tmop{sh} z} d
     \mu_1 (s) \cdot d \mu_2 (s') . \]
  Note that
  \[ \text{$(\mu_1, \mu_2)_0$} = \int_0^1 \int_0^1 s \wedge s' (1 - s \vee s')
     d \mu_1 (s) \cdot d \mu_2 (s') . \]
  Then, for $z \in \disque{\pi}$,
  \begin{equation}
    \label{claudia10} \forall \mu \in \mathcal{E}, | (\mu, \mu)_z | \leqslant
    \frac{\pi^2}{\pi^2 - |z|^2} (\mu, \mu)_0 .
  \end{equation}
\end{proposition}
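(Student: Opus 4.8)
The plan is to compare the bilinear form $(\cdot,\cdot)_z$ with $(\cdot,\cdot)_0$ by expanding the kernel $K_z(s,s') \assign \frac{\sh(zs\wedge s')\,\sh\lp{1}z(1-s\vee s')\rp{1}}{z\,\sh z}$ as a power series in $z^2$ and controlling the resulting series coefficient by coefficient. Since $\sh$ is odd and $\sh(zt)/z = t + \frac{z^2 t^3}{3!} + \cdots$ is an even entire function of $z$ with nonnegative coefficients in $z^2$, I would write
\[
 \frac{\sh(zs\wedge s')\,\sh\lp{1}z(1-s\vee s')\rp{1}}{z\,\sh z}
 = \sum_{m\geqslant 0} c_m(s,s')\, z^{2m},
\]
and the key point is to show that $c_0(s,s') = s\wedge s'(1-s\vee s')$ is exactly the kernel of $(\cdot,\cdot)_0$, and that each $c_m(s,s')$ defines a \emph{positive} bilinear form on $\mathcal{E}$, i.e. $\sum_{j,k} c_m(s_j,s_k)\,\xi_j\cdot\xi_k \geqslant 0$ for every $\mu = \sum_j \delta_{s_j}\xi_j$. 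Granting this, one gets for $\mu\in\mathcal E$ that $(\mu,\mu)_z = \sum_m \lp{1}\sum_{j,k} c_m(s_j,s_k)\xi_j\cdot\xi_k\rp{1} z^{2m}$ with all bracketed terms nonnegative, hence $|(\mu,\mu)_z| \leqslant \sum_m \lp{1}\sum_{j,k} c_m(s_j,s_k)\xi_j\cdot\xi_k\rp{1} |z|^{2m}$.

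To finish from there I need a bound $c_m(s,s') \leqslant \frac{c_0(s,s')}{\pi^{2m}}$ valid pointwise for $s,s'\in]0,1[$ (or at least such an inequality after summing against $\mu\otimes\mu$), because then
\[
 |(\mu,\mu)_z| \leqslant \sum_{m\geqslant 0} \frac{|z|^{2m}}{\pi^{2m}} \sum_{j,k} c_0(s_j,s_k)\xi_j\cdot\xi_k
 = \frac{1}{1-|z|^2/\pi^2}\,(\mu,\mu)_0 = \frac{\pi^2}{\pi^2-|z|^2}(\mu,\mu)_0,
\]
which is exactly \eqref{claudia10}. The cleanest route to both the positivity of $c_m$ and the ratio bound is the probabilistic/spectral interpretation: $s\wedge s'(1-s\vee s')$ is the Green's function of $-\frac{d^2}{ds^2}$ on $[0,1]$ with Dirichlet boundary conditions, and $K_z(s,s')$ is the Green's function of $-\frac{d^2}{ds^2} + z^2$ (equivalently $-\frac{d^2}{ds^2}-(iz)^2$, matching the connection with a propagator mentioned in the text). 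Expanding $(-\frac{d^2}{ds^2}+z^2)^{-1} = \sum_m (-z^2)^m (-\frac{d^2}{ds^2})^{-(m+1)}$ — wait, sign: rather $\lp{1}-\tfrac{d^2}{ds^2}+z^2\rp{1}^{-1} = \sum_m (-1)^m z^{2m}\lp{1}-\tfrac{d^2}{ds^2}\rp{1}^{-(m+1)}$, so I should instead use the eigenfunction expansion. With eigenfunctions $e_n(s)=\sqrt2\sin(n\pi s)$ and eigenvalues $n^2\pi^2$ one has $K_z(s,s') = \sum_{n\geqslant 1}\frac{e_n(s)e_n(s')}{n^2\pi^2+z^2}$, hence $(\mu,\mu)_z = \sum_{n\geqslant 1}\frac{|\hat\mu_n|^2}{n^2\pi^2+z^2}$ where $\hat\mu_n \assign \int_0^1 e_n(s)\,d\mu(s) \in \mathbbm{C}^\nu$ and $|\hat\mu_n|^2 \assign \hat\mu_n\cdot\overline{\hat\mu_n}\geqslant 0$. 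Then $(\mu,\mu)_0 = \sum_n |\hat\mu_n|^2/(n^2\pi^2)$, and since $|n^2\pi^2 + z^2| \geqslant n^2\pi^2 - |z|^2 \geqslant n^2\pi^2(1-|z|^2/\pi^2) > 0$ for $|z|<\pi$, we get $\frac{1}{|n^2\pi^2+z^2|} \leqslant \frac{\pi^2}{\pi^2-|z|^2}\cdot\frac{1}{n^2\pi^2}$ term by term, and summing gives \eqref{claudia10} directly.

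The main obstacle is justifying the eigenfunction expansion of $K_z$ and the interchange of $\sum_n$ with the double integral defining $(\mu,\mu)_z$; since $\mu$ is a finite sum of Dirac masses this is just a finite linear combination of the pointwise-convergent series $K_z(s_j,s_k)=\sum_n e_n(s_j)e_n(s_k)/(n^2\pi^2+z^2)$ (which converges absolutely and uniformly on compact subsets of $]0,1[^2$ away from the diagonal, and at worst conditionally on the diagonal — but one can verify the identity $K_z(s,s')=\sum_n e_n(s)e_n(s')/(n^2\pi^2+z^2)$ directly by checking it solves the defining ODE/jump conditions, or by noting both sides are the Green kernel), so no delicate analysis is actually needed. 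The only thing to double-check carefully is the sign in $z^2$ versus $-z^2$: with $\sh z = \frac{e^z-e^{-z}}{2} = \sin(-iz)/(-i)$, one has $\frac{\sh(zs)}{\sh z}$ matching $\sin$ after $z\mapsto -iz$, which turns $-\frac{d^2}{ds^2} - (iz)^2 = -\frac{d^2}{ds^2}+z^2$, confirming the eigenvalue shift is $+z^2$ and hence the bound above. I would present the spectral identity as the one nontrivial input, then the inequality \eqref{claudia10} drops out in two lines.
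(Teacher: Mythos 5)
Your final argument is correct and is essentially the paper's own proof: after discarding the initial power-series-in-$z^2$ idea, you identify $K_z$ as the Green kernel of the Dirichlet operator $S=-\frac{d^2}{ds^2}+z^2$ on $[0,1]$, expand in the eigenbasis $\sqrt{2}\sin(n\pi s)$, and use the termwise bound $|n^2\pi^2+z^2|^{-1}\leqslant\frac{\pi^2}{\pi^2-|z|^2}(n^2\pi^2)^{-1}$, which is exactly the paper's argument $(\mu,\mu)_z=\langle\mu,(S+z^2)^{-1}\mu\rangle$ together with its estimate (\ref{claudia17.5}). (Your worry about only conditional convergence on the diagonal is unnecessary, since the $1/(n^2\pi^2+z^2)$ decay gives absolute and uniform convergence everywhere.)
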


\begin{proof}
  For $(s, s') \in [0, 1]^2$, let
  \[ K (s, s') \assign \frac{\tmop{sh} (zs \wedge s') \tmop{sh} \lp{1} z (1 -
     s \vee s') \rp{1}}{z \tmop{sh} z} . \]
  Then
  \begin{equation}
    \label{claudia12} \left\{ \begin{array}{l}
      - \frac{d^2 K}{ds^2} + z^2 K = \delta_{s = s'}\\
      \\
      K|_{s = 0} = K|_{s = 1} = 0
    \end{array} \right. .
  \end{equation}
  Let $(\xi_1, \ldots, \xi_n) \in \mathbbm{R}^{\nu n}$ and $(s_1, \ldots, s_n)
  \in] 0, 1 [^n$. Let $u$ be the function on $[0, 1]$ defined by
  \[ u (s) = \sum_{j = 1}^n \frac{\tmop{sh} (zs \wedge s_j) \tmop{sh} \lp{1} z
     (1 - s \vee s_j) \rp{1}}{z \tmop{sh} z} \xi_j . \]
  $u$ is continuous and piecewise differentiable on $[0, 1]$. Let $\mu \assign
  \sum_{j = 1}^n \delta_{s_j} \xi_j$. By (\ref{claudia12})
  \begin{equation}
    \label{claudia16} \left\{ \begin{array}{l}
      - \frac{d^2 u}{ds^2} + z^2 u = \mu\\
      \\
      u (0) = u (1) = 0
    \end{array} \right. .
  \end{equation}
  For $n \geqslant 1$, $k, l \in \{1, \ldots, \nu\}$ and $s \in [0, 1]$, let
  $e_{n, k, l} (s) \assign \sqrt{2} \sin (n \pi s) \delta_{k = l}$. $(e_{n,
  k})_{n, k}$ is an orthonormal basis of $L^2 ([0, 1], \mathbbm{C}^{\nu})$
  which diagonalizes the unbounded operator $S \assign - \frac{d^2}{d s^2}$.
  Let
  \begin{eqnarray*}
    H^1_0 \assign &  & \lbc{2} f \in L^2 ([0, 1], \mathbbm{C}^{\nu}) \ve{2}
    \frac{d f}{d s} \in L^2, f (0) = f (1) = 0 \rbc{2}\\
    = &  & \text{$\lbc{2} \sum_{n, k} f_{n, k} e_{n, k}$} \ve{2} \sum_{n, k}
    |n f_{n, k} |^2 < \infty \rbc{2},
  \end{eqnarray*}
  \[ H^{- 1} \assign \lbc{3} \sum_{n, k} f_{n, k} e_{n, k} \ve{3} \sum_{n, k}
     \ve{2} \frac{f_{n, k}}{n} \sve{2}{2}{} < \infty \rbc{3} . \]
  For $(f, g) \in H^{- 1} \times H_0^1$, let
  \[ \left\langle f, g \right\rangle \assign \int_0^1 \bar{f} (s) \cdot g (s)
     d s = \sum_{n, k} \bar{f}_{n, k} g_{n, k} . \]
  Note that $S$ can be viewed as an isomorphism from $H_0^1$ to $H^{- 1}$. The
  function $u$ belongs to $H_0^1$ and by (\ref{claudia16}), $\mu = (S + z^2) u
  \in H^{- 1}$. Moreover, $(\mu, \mu)_z$=$\int_0^1 u \cdot d \mu$ and $\pi^2$
  is the lowest eigenvalue of $S$ (every eigenvalue of $S$ are real). Hence,
  for $z \in \disque{\pi}$,
  \begin{equation}
    \label{claudia17} \text{$(\mu, \mu)_z = \left\langle \mu, (S + z^2)^{- 1}
    \mu \right\rangle,$} (\mu, \mu)_0 = \left\langle \mu, S^{- 1} \mu
    \right\rangle
  \end{equation}
  For $\lambda \in [\pi^2, + \infty [$ and $z \in \disque{\pi}$, the following
  inequality holds
  \begin{equation}
    \label{claudia17.5} | (\lambda + z^2)^{- 1} | \leqslant \frac{\pi^2}{\pi^2
    - |z|^2} \times \lambda^{- 1}
  \end{equation}
  Then (\ref{claudia10}) is a consequence of (\ref{claudia17}) and
  (\ref{claudia17.5}).
\end{proof}

Let $v (s) \assign \sum_{j = 1}^n s \wedge s_j (1 - s \vee s_j) \xi_j$. Then
\[ (\mu, \mu)_0 = \int_0^1 v \cdot d \mu = \int_0^1 \left( \frac{d v}{d s}
   \right)^2 \]
since $\mu = - \frac{d^2 v}{ds^2}$. One gets

\begin{remark}
  \label{remarkclaudia2}$(., .)_0$ is a positive definite symmetric bilinear
  form on $\mathcal{E}$.
\end{remark}

For $(\xi_1, \ldots, \xi_n) \in \mathbbm{R}^{\nu n}$ and $0 < s_1 < \cdots <
s_n < 1$, let
\begin{equation}
  \label{claudia29} \Omega . \xi \otimes_n \xi \assign \sum_{j, k = 1}^n
  \frac{\tmop{sh} (\omega ts_{j \wedge k}) \tmop{sh} \lp{1} \omega t (1 - s_{j
  \vee k}) \rp{1}}{\omega \tmop{sh} (\omega t)} \xi_j \cdot \xi_k .
\end{equation}
\begin{proposition}
  \label{propositionclaudia4}Let $\omega \in \mathbbm{R} \cup i \mathbbm{R}$.
  There exists $T_d > 0$ such that for every $\text{$n \geqslant 1, (s_1,
  \ldots, s_n) \in [0, 1]^n, (\xi_1, \ldots, \xi_n) \in \mathbbm{R}^{\nu n}$},
  t \in \mathbbm{C}$ with the condition $0 < s_1 < \cdots < s_n < 1, |t| <
  T_d$,
  \begin{equation}
    \label{claudia30} \mathcal{R} et \geqslant 0 \Rightarrow \mathcal{R} e
    \left( \Omega . \xi \otimes_n \xi \right) \geqslant 0,
  \end{equation}
  \begin{equation}
    \label{claudia32} \left| \Omega . \xi \otimes_n \xi \right| \leqslant 2
    n|t| \sum_{j = 1}^n \xi_j^2 .
  \end{equation}
\end{proposition}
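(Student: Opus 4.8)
The plan is to fix $\omega$, the points $0<s_1<\cdots<s_n<1$ and the vectors $\xi_1,\ldots,\xi_n$, and to regard $g(t)\assign\Omega\cdot\xi\otimes_n\xi$ as a function of the single complex variable $t$. Writing $\mu\assign\sum_{j=1}^n\delta_{s_j}\xi_j\in\mathcal{E}$ and substituting $z=\omega t$ in the definition of the bilinear form $(\cdot,\cdot)_z$ of Proposition \ref{propositionclaudia2}, one checks the identity $g(t)=t\,(\mu,\mu)_{\omega t}$ (valid for $|\omega t|<\pi$). If all $\xi_j$ vanish then $g\equiv0$ and there is nothing to prove; otherwise I set $c\assign(\mu,\mu)_0=\bar{\Omega}\cdot\xi\otimes_n\xi$, which by Remark \ref{remarkclaudia2} and Lemma \ref{lemmaclaudia1} satisfies $0<c\leqslant n\sum_{j=1}^n\xi_j^2$.

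For (\ref{claudia32}) I would choose $\tilde T>0$ with $|\omega|\tilde T\leqslant\pi/\sqrt2$ (and $\tilde T$ arbitrary if $\omega=0$) and apply Proposition \ref{propositionclaudia2} with $z=\omega t$: for $|t|\leqslant\tilde T$ this gives $|(\mu,\mu)_{\omega t}|\leqslant\frac{\pi^2}{\pi^2-|\omega t|^2}(\mu,\mu)_0\leqslant 2c$, hence $|g(t)|\leqslant 2|t|c\leqslant 2n|t|\sum_j\xi_j^2$. So (\ref{claudia32}) will hold as soon as the final $T_d$ is chosen $\leqslant\tilde T$.

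For (\ref{claudia30}) the idea is to apply Lemma \ref{lemmaclaudia2} to $f\assign g/c$. I would verify its hypotheses with constants depending only on $\omega$: on $\disque{\tilde T}$ the function $f$ is analytic (the kernel defining $(\mu,\mu)_z$ is singular only at $z\in i\pi\mathbbm{Z}^{\ast}$, i.e. $|\omega t|\geqslant\pi$); $f(0)=0$ and $f'(0)=1$, since $g(t)/t=(\mu,\mu)_{\omega t}\to c$ as $t\to0$; $|f|\leqslant M\assign2\tilde T$ on $\disque{\tilde T}$ by the previous paragraph; and $f$ satisfies (\ref{claudia7.5}). The last point is the heart of the matter: for $t=i\sigma$ with $\sigma\in\mathbbm{R}$, using $\sh(i\theta)=i\sin\theta$, one sees that when $\omega\in\mathbbm{R}$ (so $\omega t\in i\mathbbm{R}$) or $\omega\in i\mathbbm{R}$ (so $\omega t\in\mathbbm{R}$), every entry $\frac{\sh(\omega t s_{j\wedge k})\sh(\omega t(1-s_{j\vee k}))}{\omega\sh(\omega t)}$ is purely imaginary, hence $g(i\sigma)$ is purely imaginary and $\re f(i\sigma)=0$. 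Lemma \ref{lemmaclaudia2} then produces $T_d>0$, depending only on $\tilde T$ and $M$ — hence only on $\omega$ — with $T_d\leqslant\tilde T$ and $\re t>0\Rightarrow\re f(t)>0$ on $\disque{T_d}$; combining with $\re f=0$ on the imaginary axis yields $\re t\geqslant0\Rightarrow\re g(t)=c\,\re f(t)\geqslant0$ for $|t|<T_d$, and (\ref{claudia32}) holds there as well since $T_d\leqslant\tilde T$.

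The main obstacle is that a single $T_d$ must serve all dimensions $n$ and all data $(s_j),(\xi_j)$ simultaneously; this is exactly why the normalisation by $c=(\mu,\mu)_0$ and the dimension-free estimate $|(\mu,\mu)_z|\leqslant\frac{\pi^2}{\pi^2-|z|^2}(\mu,\mu)_0$ of Proposition \ref{propositionclaudia2} are needed — they place the whole family $\{f\}$ into one fixed class to which Lemma \ref{lemmaclaudia2} applies with common constants $\tilde T,M$. A secondary but essential point is the verification of (\ref{claudia7.5}): the argument breaks down if $\omega^2\notin\mathbbm{R}$, so the hypothesis $\omega\in\mathbbm{R}\cup i\mathbbm{R}$ is used in a crucial way.
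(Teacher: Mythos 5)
Your proof is correct and follows essentially the same route as the paper: reduce to $f(t)=t(\mu,\mu)_{\omega t}/(\mu,\mu)_0$, use the dimension-free bound of Proposition \ref{propositionclaudia2} with $\tilde T=\pi/(\sqrt2|\omega|)$ to get (\ref{claudia32}) and the uniform bound $M$, and then invoke Lemma \ref{lemmaclaudia2}. The only (harmless) difference is how (\ref{claudia7.5}) is checked: you compute directly that the kernel entries are purely imaginary for $t\in i\mathbbm{R}$, whereas the paper notes that $f$ is odd with $\overline{f(t)}=f(\bar t)$, so its Taylor coefficients are real.
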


\begin{proof}
  Let $\mu \assign \sum_{j = 1}^n \delta_{s_j} \xi_j$. Then $\Omega . \xi
  \otimes_n \xi = t (\mu, \mu)_{\omega t}$ and $\bar{\Omega} \cdot \xi
  \otimes_n \xi = (\mu, \mu)_0$. Let $\tilde{T} \assign \frac{\pi}{\sqrt{2} |
  \omega |}$. By (\ref{claudia10}) and Lemma \ref{lemmaclaudia1},
  (\ref{claudia32}) holds for $t \in \disque{\tilde{T}}$. Note that
  (\ref{claudia32}) can also be obtained directly without using
  (\ref{claudia10}).
  
  Let us choose arbitrary $\xi_1, \ldots, \xi_n$ such that $(\xi_1, \ldots,
  \xi_n)$ does not vanish. By Remark \ref{remarkclaudia2}, $(\mu, \mu)_0 \neq
  0$. Let $f$ be the function defined by
  \[ f (t) = t \frac{(\mu, \mu)_{\omega t}}{(\mu, \mu)_0} = \frac{\Omega . \xi
     \otimes_n \xi}{(\mu, \mu)_0} . \]
  By (\ref{claudia10}), $f$ is bounded on $\text{$\disque{\tilde{T}}$}$ by $2
  \times \frac{\pi}{\sqrt{2} | \omega |}$. Since $f$ is odd and $\omega \in
  \mathbbm{R} \cup i \mathbbm{R}$, $\overline{f (t)} = f ( \bar{t})$, for $t
  \in \text{$\text{$\disque{\tilde{T}}$}$}$. Then $f (t) = t + a t^3 + b t^5 +
  \cdots$ with $a, b, \ldots$ real and (\ref{claudia7.5}) holds. Obviously, $f
  (0) = 0$ and $f' (0) = 1$. Then we can use Lemma \ref{lemmaclaudia2} and
  there exists $T_d < \tilde{T}$ such that $\re f (t) > 0$ for $t \in
  \demidisque{T_d}$. Since $(\mu, \mu)_0 \in] 0, + \infty [$,
  (\ref{claudia30}) holds for $t \in \demidisque{T_d}$.
\end{proof}

\subsection{The deformation formula}

Let $(x, y) \in \mathbbm{C}^{2 \nu}$, $t \in \mathbbm{C}^{\ast}$ and $\omega
\in \mathbbm{R} \cup i\mathbbm{R}$. Let $u_{\omega}$ be defined by
\begin{equation}
  \label{anna0.0} u_{\omega} = \left( 4 \pi \frac{\tmop{sh} (\omega
  t)}{\omega} \right)^{- \nu / 2} \exp \lp{2} - \frac{1}{4}
  \frac{\omega}{\tmop{sh} (\omega t)} (\tmop{ch} (\omega t) (x^2 + y^2) - 2 x
  \cdot y) \rp{2} .
\end{equation}

Then, by one variant of Mehler's formula,
\[ \text{$\text{$\partial_t u_{\omega} = \lp{1} \partial^2_x -
   \frac{\omega^2}{4} x^2 \rp{1} u_{\omega}$}, u_{\omega} |_{t = 0^+} =_{}
   \delta_{x = y} .$} \]
Note that
\[ u_0 = \left( 4 \pi t \right)^{- \nu / 2} e^{- \frac{(x - y)^2}{4 t}} . \]
We denote by $q_{\omega}^{\natural}$ the extremal path of the action $S
\assign \int_0^t \lp{1} \dot{q}^2 (s) + \omega^2 q^2 (s) \rp{1} ds$ such that
$q_{\omega}^{\natural} (0) = y$ and $q_{\omega}^{\natural} (t) = x$. Then
\begin{equation}
  \label{anna03} q_{\omega}^{\natural} (s) = \frac{1}{\tmop{sh} (\omega t)}
  \lp{2} \tmop{sh} (\omega s) x + \tmop{sh} \lp{1} \omega (t - s) \rp{1} y
  \rp{2} .
\end{equation}
Let $(\xi_1, \ldots, \xi_n) \in \mathbbm{R}^{\nu n}$. Let us define, for $s
\in [0, 1]$,
\[ q_{\omega} (s) \assign \frac{1}{\tmop{sh} (\omega t)} \lp{2} \tmop{sh}
   (\omega t s) x + \tmop{sh} \lp{1} \omega t (1 - s) \rp{1} y \rp{2} \]
and, for $s = (s_1, \ldots, s_n) \in [0, 1]^n$,
\[ q_{\omega}^n (s) \cdot \xi \assign q_{\omega} (s_1) \cdot \xi_1 + \cdots +
   q_{\omega} (s_n) \cdot \xi_n . \]
\begin{remark}
  \label{anna04}There exists $\delta > 0$ such that, for every $(t, \omega)
  \in \mathbbm{C}^2$, $s \in [0, 1]$, ($x, y) \in \mathbbm{C}^{2 \nu}$,
  \begin{equation}
    \label{anna04.a} |q_{\omega} (s) | \leqslant 4 \max (|x|, |y|) \text{if }
    | \omega t| < \delta .
  \end{equation}
\end{remark}

\begin{proposition}
  \label{annaprop1}Let $\mu$ be a measure on $\mathbbm{R}^{\nu}$ with values
  in a complex finite dimensional space of square matrices. Let us assume that
  for every $R > 0$
  \begin{equation}
    \text{} \label{anna1} \int_{\mathbbm{R}^{\nu}} \exp (R| \xi |) d| \mu |
    (\xi) < \infty .
  \end{equation}
  Let $c (x) \assign \int_{\mathbbm{R}^{\nu}} \exp (ix \cdot \xi) d \mu
  (\xi)$. Let $\omega \in \mathbbm{R} \cap i \mathbbm{R}$. Let $v$ be defined
  by
  \begin{equation}
    \label{anna3} \left\{ \begin{array}{l}
      \text{$v = \mathbbm{1} + \sum_{n \geqslant 1} v_n$},\\
      \\
      v_n (t, x, y) = t^n \int_{0 < s_1 < \cdots < s_n < 1}
      \int_{\mathbbm{R}^{\nu n}} e^{- \Omega . \xi \otimes_n \xi}
      e^{iq_{\omega}^n (s) \cdot \xi} d^{\nu n} \mu^{\otimes} (\xi) d^n s.
    \end{array} \right.
  \end{equation}
  In (\ref{anna3}), $d^n s$ denotes $ds_1 \cdots d \nonesep s_n$ and $d^{\nu
  n} \mu^{\otimes} (\xi)$ denotes $d \mu (\xi_n) \cdots d \mu (\xi_1)$.
  
  Then there exists $T_c > 0$ such that $v \in \mathcal{A} ( \demidisque{T_c}
  \times \mathbbm{C}^{2 \nu})$ and the function $u : = u_{\omega} v$ is
  solution of (\ref{barbara20}). If $\omega = 0$, $v \in \mathcal{A} (
  \demiplan{} \times \mathbbm{C}^{2 \nu})$ .
\end{proposition}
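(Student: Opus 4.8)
The plan is to establish, in order: that each $v_n$ is well defined, that the series $\mathbbm{1}+\sum_n v_n$ converges locally uniformly (whence $v$ is analytic on the asserted domain), and finally that $u=u_{\omega}v$ solves (\ref{barbara20}). For the first two points, fix $t$ with $\re t\geqslant 0$, and if $\omega\neq 0$ also with $|\omega t|$ small enough that Proposition \ref{propositionclaudia4} and Remark \ref{anna04} apply; when $\omega=0$, Lemma \ref{lemmaclaudia1} together with $q_0(s)=sx+(1-s)y$ impose no smallness on $t$. Then $\re(\Omega\cdot\xi\otimes_n\xi)\geqslant 0$, so $|e^{-\Omega\cdot\xi\otimes_n\xi}|\leqslant 1$, while $|e^{iq_{\omega}^n(s)\cdot\xi}|=e^{-\im(q_{\omega}^n(s)\cdot\xi)}\leqslant\prod_{j=1}^n e^{R|\xi_j|}$ with $R:=4\max(|x|,|y|)$. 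Since $\{0<s_1<\cdots<s_n<1\}$ has volume $1/n!$ and $M_R:=\int_{\mathbbm{R}^{\nu}}e^{R|\xi|}d|\mu|(\xi)<\infty$ by (\ref{anna1}),
\[ |v_n(t,x,y)|\leqslant\frac{(|t|\,M_R)^n}{n!},\qquad\text{so}\qquad\sum_{n\geqslant 1}|v_n(t,x,y)|\leqslant e^{|t|M_R}. \]
Here $M_R$ stays bounded on compact subsets of $\demidisque{T_c}\times\mathbbm{C}^{2\nu}$ with $T_c:=\min(T_d,\delta/|\omega|)$ (of $\demiplan{}\times\mathbbm{C}^{2\nu}$ when $\omega=0$), so these bounds are locally uniform.

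For analyticity, fix $(s,\xi)$: the integrand in (\ref{anna3}) is analytic in $(t,x,y)$ on $\{|\omega t|<\pi\}\times\mathbbm{C}^{2\nu}$, the only delicate point being the apparent singularity of $\Omega$ at $t=0$, which is removable because numerator and denominator in (\ref{claudia2}) vanish to the same order. Combined with the locally uniform domination above, dominated convergence (or Morera) gives that each $v_n$ is analytic, and then $v=\mathbbm{1}+\sum_n v_n$ is a locally uniform limit of analytic functions, hence analytic on $\demidisque{T_c}\times\mathbbm{C}^{2\nu}$ (on $\demiplan{}\times\mathbbm{C}^{2\nu}$ if $\omega=0$). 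The factor $t^n$ in $v_n$ forces $v_n\to 0$ as $t\to 0^+$, so $v|_{t=0^+}=\mathbbm{1}$.

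For the PDE, write $L_{\omega}:=\partial_x^2-\frac{\omega^2}{4}x^2$; since $u_{\omega}$ solves $\partial_t u_{\omega}=L_{\omega}u_{\omega}$, expanding $\partial_t(u_{\omega}v)$ and $L_{\omega}(u_{\omega}v)+c(x)u_{\omega}v$ shows that ``$u_{\omega}v$ solves (\ref{barbara20})'' is equivalent to ``$v$ solves $\partial_t v=\partial_x^2 v+2(\partial_x\log u_{\omega})\cdot\partial_x v+c(x)v$ with $v|_{t=0^+}=\mathbbm{1}$'', and from (\ref{anna0.0}) one computes $\partial_x\log u_{\omega}=-\frac{1}{2}\dot{q}_{\omega}^{\natural}(t)$. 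I would verify this transport equation by differentiating the series (\ref{anna3}) term by term: differentiation in $x,y$ is legitimate since it only brings down polynomial factors in $\xi$, absorbed by the exponential weight in (\ref{anna1}), and differentiation in $t$ is controlled by a Cauchy estimate on a slightly smaller disk. It then remains to check that $\partial_t v_n=(\partial_x^2+2(\partial_x\log u_{\omega})\cdot\partial_x)v_n+c(x)v_{n-1}$; this follows from an integration by parts in the innermost variable $s_n$, using that (after rescaling $s\mapsto ts$) the kernel in (\ref{claudia2}) is the Dirichlet Green's function of $-d^2/ds^2+\omega^2 t^2$ — precisely identity (\ref{claudia12}) with $z=\omega t$ — and that $q_{\omega}(1)=x$, $q_{\omega}(0)=y$, so the boundary term at $s_n=1$ produces $c(x)v_{n-1}$ after integrating out $\xi_n$.

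The hard part is this last step: organizing the term-by-term differentiation so that the differentiated series telescopes to the transport equation, i.e.\ converting $\partial_t(\Omega\cdot\xi\otimes_n\xi)$ and $\partial_t q_{\omega}^n(s)$ into the spatial operators acting on $v_n$ plus the ``new vertex'' contribution $c(x)v_{n-1}$. Convergence and analyticity are routine once Proposition \ref{propositionclaudia4} is available; it is the compatibility of the Gaussian covariance $\Omega$ with the transport equation — the genuine content of the deformation formula — that requires care, with the heuristic Wiener/Wick computation of the Appendix serving as a guide for the bookkeeping.
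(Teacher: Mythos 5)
Your convergence and analyticity argument is correct and is essentially the paper's: same choice $T_c=\min(T_d,\delta/|\omega|)$, same use of (\ref{claudia30}) and (\ref{anna04.a}) to get $|v_n|\leqslant (|t|A)^n/n!$, dominated convergence/Morera for analyticity, and the reduction of (\ref{barbara20}) to the conjugate transport equation with drift $-\dot q^{\natural}_{\omega}(t)$ is also the paper's step. The gap is that the heart of the proposition --- the verification that each $v_n$ defined by (\ref{anna3}) actually satisfies the recursion (\ref{anna5}), equivalently that the Gaussian weight $e^{-\Omega\cdot\xi\otimes_n\xi}$ is compatible with the transport equation --- is not proved. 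You reduce it to ``an integration by parts in the innermost variable $s_n$'' plus the Green's function identity (\ref{claudia12}), and then concede in your last paragraph that organizing this computation is ``the hard part'' that ``requires care''. That computation is precisely the content of the deformation formula; a proof must carry it out, and yours stops where it begins.

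Moreover, the mechanism you sketch does not work as stated. In the variables of (\ref{anna3}) the simplex $0<s_1<\cdots<s_n<1$ is $t$-independent, so $\partial_t v_n$ produces no boundary term at $s_n=1$, and no $\partial_{s_n}$-structure is exhibited in the integrand that would let an integration by parts in $s_n$ alone generate the new vertex $c(x)v_{n-1}$; also (\ref{claudia12}) is used in the paper only for the bounds of Proposition \ref{propositionclaudia4}, not for this verification. What the paper does instead (for real $t\in\,]0,T_c[$, extending by analyticity) is rescale to the $t$-dependent simplex $0<s_1<\cdots<s_n<t$ and rewrite $v_n=\int F_n\,d^ns$ with $F_n$ in the Wick form (\ref{anna6}); then $(\partial_t+\dot q^{\natural}_{\omega}(t)\cdot\partial_x)v_n$ splits into a boundary term, equal to $c(x)v_{n-1}$ because $\Omega^{\natural}_{j,k}|_{s_n=t}=0$ whenever $j=n$ or $k=n$, and an interior term, equal to $\partial^2_x v_n$ thanks to two exact identities your sketch never uses: $(\partial_t+\dot q^{\natural}_{\omega}(t)\cdot\partial_x)\,q^{\natural}_{\omega}(s)=0$ (formula (\ref{anna7})) and $\partial_t\Omega^{\natural}_{j,k}=\frac{\sh(\omega s_j)\,\sh(\omega s_k)}{\sh^2(\omega t)}$, which is exactly the coefficient the chain rule produces when $\partial_x^2$ acts through $z_j=q^{\natural}_{\omega}(s_j)$, since $\partial_{x_{\beta}}q^{\natural}_{\omega,\alpha}(s)=\delta_{\alpha=\beta}\,\sh(\omega s)/\sh(\omega t)$ and the cancellation involves the full Hessian in all pairs $(j,k)$, not only those with $j$ or $k$ equal to $n$. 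Without the rescaling (or an equivalent identity converting $\partial_t$ into $s$-derivatives plus a genuine boundary contribution) and without this matching of $\partial_t\Omega^{\natural}$ with the second-order $x$-derivatives, the recursion (\ref{anna5}) is asserted rather than proved, so the PDE part of the proposition remains open in your write-up.
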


\begin{proof}
  Let $T_d$ given by Proposition \ref{propositionclaudia4} and let $\delta$
  given by Remark \ref{anna04}. We choose
  \begin{equation}
    \label{anna4} \text{$T_c = \min \lp{2} T_d, \frac{\delta}{| \omega |}
    \rp{2}$} .
  \end{equation}
  Let $R > 0$ and let
  \[ A \assign \int \exp (4 R| \xi |) d| \mu | (\xi) . \]
  Let $(x, y) \in \mathbbm{C}^{2 \nu}$ such that $\text{$|x|, |y| < R$}$. Let
  $t \in \demidisque{T_c}$. By (\ref{claudia30}) and (\ref{anna04.a})
  \begin{equation}
    \label{anna4.5} | \exp (- \Omega . \xi \otimes_n \xi) \exp (iq_{\omega}^n
    (s) \cdot \xi) | \leqslant \exp \lp{1} 4 R (| \xi_1 | + \cdots + | \xi_n
    |) \rp{1} .
  \end{equation}
  Hence
  \[ |v_n | \leqslant \frac{(|t|A)^n}{n!}, \]
  so the series in (\ref{anna3}) converges absolutely. Since $R$ is arbitrary,
  $v$ is well defined on \ \ $\demidisque{T_c} \times \mathbbm{C}^{2 \nu}$. By
  dominated convergence theorem, one can check that $v_n$ and hence $v \in
  \text{$\mathcal{A} ( \demidisque{T_c} \times \mathbbm{C}^{2 \nu}$}$).
  
  Let us verify that the function $u : = u_{\omega} v$, with $v$ given by
  (\ref{anna3}), is solution of (\ref{barbara20}). A solution $u$ of
  (\ref{barbara20}) is given, if we use the relation $u = u_{\omega} v$ by a
  solution $v$ of the conjugate equation
  \[ \left\{ \begin{array}{l}
       \lp{2} \text{$\partial_t - \frac{2}{u_{\omega}} \partial_x u_{\omega}
       \cdot \partial_x \rp{2} v = \partial_x^2 v$} + c (x) v \text{ \ \ \ (}
       t \neq 0 \text{)}\\
       \\
       v_{} |_{t = 0^+} =_{} \mathbbm{1}
     \end{array} \right. . \]
  Notice that
  \[ - \frac{2}{u_{\omega}} \partial_x u_{\omega} = \frac{\omega}{\tmop{sh}
     (\omega t)} \lp{1} \tmop{ch} (\omega t) x - y \rp{1} =
     \dot{q}_{\omega}^{\natural} (t) . \]
  Set $v_0 \assign \mathbbm{1} .$ It is then sufficient to verify that $v_n$
  given by (\ref{anna3}) satisfies
  \begin{equation}
    \label{anna5} \left\{ \begin{array}{l}
      \lp{1} \partial_t + \dot{q}_{\omega}^{\natural} (t) \cdot \partial_x
      \rp{1} v_n \text{$= \partial^2_x v_n$} + c (x) v_{n - 1}\\
      \\
      v_n |_{t = 0^+} =_{} 0
    \end{array} \right.,
  \end{equation}
  for $(t, x, y) \in \demidisque{T_c} \times \mathbbm{C}^{2 \nu}$, $n
  \geqslant 1$.
  
  It suffices to check (\ref{anna5}) for $(t, x, y) \in] 0, T_c [\times
  \mathbbm{C}^{2 \nu}$. By (\ref{anna3}) and the definition of $c$
  \[ v_n = \int_{0 < s_1 < \cdots < s_n < t} F_n d^n s, \]
  where
  \begin{equation}
    \label{anna6} F_n \assign \lb{2} \exp \lp{1} \sum_{j, k = 1}^n \Omega_{j,
    k}^{\natural} \partial_{z_j} \cdot \partial_{z_k} \rp{1} c (z_n) \cdots c
    (z_1) \rb{2} \ve{2}_{\tmscript{\begin{array}{l}
      \text{$z_1 = q_{\omega}^{\natural} (s_1)$}\\
      \text{$\cdots$}\\
      \text{$z_n = q_{\omega}^{\natural} (s_n)$}
    \end{array}}} .
  \end{equation}
  Here $\Omega^{\natural}$ denotes the \ $n \times n$ matrix defined by
  (\ref{claudia0}). In the Appendix, we give an heuristic explanation of this
  result. Here is a rigorous verification. We have
  
  $\text{$(\partial_t + \dot{q}_{\omega}^{\natural} (t) \cdot \partial_x) v_n
  = (\tmop{boundary}) + (\tmop{interior})$}$ where
  \[ (\tmop{boundary}) \assign \int_{0 < s_1 < \cdots < s_{n - 1} < t} F_n
     |_{s_n = t} d^{n - 1} s, \]
  \[ (\tmop{interior}) \assign \int_{0 < s_1 < \cdots < s_n < t} \lp{1}
     \text{$\partial_t + \dot{q}_{\omega}^{\natural} (t) \cdot \partial_x
     \rp{1} F_n$} d^n s. \]
  In particular, $\Omega_{j, k}^{\natural} |_{s_n = t} = 0$ for $j = n$ or $k
  = n$. Therefore
  \[ \text{$(\tmop{boundary}) = c (x) v_{n - 1}$.} \]
  Now we want to prove that $(\tmop{interior}) = \partial^2_x v_n$.
  
  Using, for instance, explicit expressions of $\dot{q}_{\omega}^{\natural}
  (t)$ and $q_{\omega}^{\natural}$, we get, for $s \in] 0, t [$ and $m = 1,
  \ldots, \nu$,
  \begin{equation}
    \label{anna7} \text{$\lp{1} \partial_t + \dot{q}_{\omega}^{\natural} (t)
    \cdot \partial_x \rp{1}$} \text{$q_{\omega, m}^{\natural} (s) = 0$} .
  \end{equation}
  Then, if $\varphi (z_1, \ldots, z_n$) is an arbitrary differentiable
  function of $(z_1, \ldots, z_n) \in \mathbbm{C}^{\nu n}$,
  \[ \lp{1} \partial_t \text{$+ \dot{q}_{\omega}^{\natural} (t) \cdot
     \partial_x \rp{1}$} \text{$[\varphi (z_1, \ldots, z_n$)]}
     |_{\tmscript{\begin{array}{l}
       \text{$z_1 = q_{\omega}^{\natural} (s_1)$}\\
       \text{$\cdots$}\\
       \text{$z_n = q_{\omega}^{\natural} (s_n)$}
     \end{array}}} = 0. \]
  So, since $\partial_t \Omega_{j, k}^{\natural} = \frac{\tmop{sh} (\omega
  s_j)}{\tmop{sh} (\omega t)} \frac{\tmop{sh} (\omega s_k)}{\tmop{sh} (\omega
  t)}$, denoting $B_{j, k} \assign \frac{\tmop{sh} (\omega s_j)}{\tmop{sh}
  (\omega t)} \frac{\tmop{sh} (\omega s_k)}{\tmop{sh} (\omega t)}$,
  \[ (\tmop{interior}) = \int_{0 < s_1 < \cdots < s_n < t} G_n d^n s, \]
  where
  \[ G_n \assign \lb{3} \lp{2} \sum_{j, k = 1}^n B_{j, k} \partial_{z_j} \cdot
     \partial_{z_k} \rp{2} \exp \lp{2} \sum_{j, k = 1}^n \Omega_{j,
     k}^{\natural} \partial_{z_j} \cdot \partial_{z_k} \rp{2} c (z_n) \cdots c
     (z_1) \rb{3} \ve{3}_{\tmscript{\begin{array}{l}
       \text{$z_1 = q_{\omega}^{\natural} (s_1)$}\\
       \text{$\cdots$}\\
       \text{$z_n = q_{\omega}^{\natural} (s_n)$}
     \end{array}}} . \]
  For $(\alpha, \beta) \in \{1, \ldots, \nu^{} \}^2$, one has
  $\partial_{x_{\beta}} q_{\omega, \alpha}^{\natural} = \delta_{\alpha =
  \beta} \times \frac{\sh (\omega s)}{\sh (\omega t)}$. Since
  $q_{\omega}^{\natural}$ is linear with respect to $x$,
  \[ \partial^2_x [c (z_n) \cdots c (z_1)] |_{\tmscript{\begin{array}{l}
       \text{$z_1 = q_{\omega}^{\natural} (s_1)$}\\
       \text{$\cdots$}\\
       \text{$z_n = q_{\omega}^{\natural} (s_n)$}
     \end{array}}} = \lb{2} \lp{2} \sum_{j, k = 1}^n B_{j, k} \partial_{z_j}
     \cdot \partial_{z_k} \rp{2} c (z_n) \cdots c (z_1) \rb{2}
     \ve{2}_{\tmscript{\begin{array}{l}
       \text{$z_1 = q_{\omega}^{\natural} (s_1)$}\\
       \text{$\cdots$}\\
       \text{$z_n = q_{\omega}^{\natural} (s_n)$}
     \end{array}}} . \]
  Then $\partial^2_x F_n = G_n$. Hence
  \[ \text{$(\tmop{interior}) = \partial^2_x v_n$} . \]
  This proves that the function $u : = u_{\omega} v$ is solution of
  (\ref{barbara20}).
  
  In the case $\omega = 0$, \ $\Omega . \xi \otimes_n \xi = t \bar{\Omega}
  \cdot \xi \otimes_n \xi$ and $q_{\omega} (s) = y + \frac{s}{t} (x - y)$. We
  can get directly that $v \in \mathcal{A} ( \demiplan{} \times \mathbbm{C}^{2
  \nu})$ and that $u = u_0 v$ is solution of (\ref{barbara20}).
\end{proof}

\begin{remark}
  Weaker assumptions than (\ref{anna1}) can also be considered. One can assume
  that, for every $m \geqslant 0$,
  \begin{equation}
    \label{anna12} \int_{\mathbbm{R}^{\nu}} | \xi |^m d| \mu | (\xi) < \infty
    .
  \end{equation}
  Then, for instance in the free case, $v \in \mathcal{C}^{\infty} \lp{1}
  \overline{\demiplan{}} \times \mathbbm{R}^{2 \nu} \rp{1} \cap \mathcal{A}
  \lp{1} \demiplan{}, \mathcal{C}^{\infty} (\mathbbm{R}^{2 \nu}) \rp{1}$.
\end{remark}

\subsection{Proof of Theorem \ref{theorembarbara1} and Corollary
\ref{corollairebarbara3}}

In the study of Borel transform of the conjugate heat kernel in the free case,
the Borel transform of $t \longmapsto t^n \exp (- Bt)$, $B \in \mathbbm{C}$,
which is related to Bessel functions, plays a specific role.

\begin{lemma}
  For every \ $z \in \mathbbm{C}$, let us denote
  \[ J (z) \assign \sum_{n \geqslant 0} (- 1)^n \frac{z^n}{(n!)^2} =
     \int_0^{\pi} \cos \lp{1} 2 z^{1 / 2} \sin (\varphi) \rp{1} \frac{d
     \varphi}{\pi} . \]
  The following estimate holds for every \ $z \in \mathbbm{C}$
  \begin{equation}
    \label{carmina14} |J (z) | \leqslant \exp \lp{1} 2| \mathcal{I} mz^{1 / 2}
    | \rp{1} .
  \end{equation}
\end{lemma}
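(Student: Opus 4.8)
The plan is to prove the two stated expressions for $J(z)$ and then establish the exponential bound (\ref{carmina14}).

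First I would verify the integral representation. Starting from the series $\sum_{n\ge 0}(-1)^n z^n/(n!)^2$, one uses the classical Wallis integral $\int_0^\pi \sin^{2n}(\varphi)\,d\varphi = \pi\binom{2n}{n}/4^n = \pi (2n)!/(4^n (n!)^2)$. Expanding $\cos(2z^{1/2}\sin\varphi) = \sum_{n\ge 0}(-1)^n (2z^{1/2}\sin\varphi)^{2n}/(2n)! = \sum_{n\ge 0}(-1)^n 4^n z^n \sin^{2n}\varphi/(2n)!$ and integrating term by term (the series converges uniformly in $\varphi$ for fixed $z$, so the exchange is justified) gives $\int_0^\pi \cos(2z^{1/2}\sin\varphi)\,\frac{d\varphi}{\pi} = \sum_{n\ge 0}(-1)^n 4^n z^n/( (2n)!) \cdot (2n)!/(4^n (n!)^2) = \sum_{n\ge 0}(-1)^n z^n/(n!)^2 = J(z)$. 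Note that $\cos$ is even, so the two square roots of $z$ give the same value and there is no branch ambiguity; this is why the formula is stated with $z^{1/2}$ without comment.

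Next, the estimate. From the integral representation, $|J(z)| \le \int_0^\pi |\cos(2z^{1/2}\sin\varphi)|\,\frac{d\varphi}{\pi}$. Writing $w := 2z^{1/2}$ and using $|\cos(a+ib)|^2 = \cos^2 a + \sinh^2 b \le \cosh^2 b$, so $|\cos \zeta| \le \cosh(|\mathcal{I}m\,\zeta|) \le \exp(|\mathcal{I}m\,\zeta|)$ for any $\zeta \in \mathbbm{C}$. Applying this with $\zeta = w\sin\varphi = 2z^{1/2}\sin\varphi$ and $\varphi$ real, we get $|\mathcal{I}m(2z^{1/2}\sin\varphi)| = 2|\mathcal{I}m\,z^{1/2}|\,|\sin\varphi| \le 2|\mathcal{I}m\,z^{1/2}|$. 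Hence the integrand is bounded by $\exp(2|\mathcal{I}m\,z^{1/2}|)$ uniformly in $\varphi$, and integrating over $[0,\pi]$ against $d\varphi/\pi$ yields exactly (\ref{carmina14}).

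There is essentially no serious obstacle here; the only point requiring a little care is the justification of the term-by-term integration in establishing the integral formula (routine, by uniform convergence on the compact interval for fixed $z$) and the elementary inequality $|\cos\zeta| \le \exp(|\mathcal{I}m\,\zeta|)$, which follows from $\cos\zeta = \tfrac12(e^{i\zeta}+e^{-i\zeta})$ and the triangle inequality since $|e^{\pm i\zeta}| = e^{\mp\mathcal{I}m\,\zeta} \le e^{|\mathcal{I}m\,\zeta|}$. The bound is sharp in the sense that equality is approached as $\varphi \to \pi/2$ with $z$ negative real, which matches the role this lemma will play in controlling the Borel transform of $t\mapsto t^n e^{-Bt}$.
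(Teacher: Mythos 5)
Your proof is correct and is essentially the argument the paper has in mind: the paper simply remarks that $J(z)=J_0\bigl(2z^{1/2}\bigr)$ and appeals to the classical integral representation and bound for the Bessel function $J_0$, and your Wallis-integral computation plus the inequality $|\cos\zeta|\leqslant e^{|\mathcal{I}m\,\zeta|}$ applied with $\zeta=2z^{1/2}\sin\varphi$ is exactly the standard derivation of those facts. Nothing is missing.
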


These properties are easy to prove and well known since $J_{} (z) = J_0 \lp{1}
2 z^{1 / 2} \rp{1}, J_0$ denoting the Bessel function of order $0$.

\begin{lemma}
  \label{lemmacarmina2}Let $n \geqslant 1$. For $B \in \mathbbm{C}$, let $\tau
  \longmapsto K_n (B, \tau)$ be the Borel transform of the function $t
  \longmapsto t^n \exp (- Bt)$. Then, for $\tau \in \mathbbm{C}$,
  \begin{equation}
    \label{carmina16} K_n (B, \tau) = \tau^n \sum_{m \geqslant 0} \frac{(-
    1)^m}{m! (m + n) !} (B \tau)^m = \tau^n \int_0^1 \frac{(1 - \theta)^{n -
    1}}{(n - 1) !} J (\theta B \tau) d \theta .
  \end{equation}
  \begin{equation}
    \label{carmina18} |K_n (B, \tau) | \leqslant \frac{| \tau |^n}{n!} \exp
    \lp{2} 2 \sqrt{B} | \mathcal{I} m (\tau^{1 / 2}) | \rp{2} \tmop{for} B
    \geqslant 0 .
  \end{equation}
\end{lemma}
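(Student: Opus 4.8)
The plan is to compute $K_n(B,\cdot)$ straight from the definition of the Borel transform (\ref{barbara4}), to recognise the resulting power series as the stated integral, and then to read the bound (\ref{carmina18}) off that integral representation using (\ref{carmina14}).

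First I would expand $t^n e^{-Bt} = \sum_{m\ge 0}\frac{(-B)^m}{m!}t^{n+m}$, so that the Taylor coefficients of this entire function are $a_r=0$ for $r<n$ and $a_{n+m}=(-B)^m/m!$. Substituting into (\ref{barbara4}) --- the Borel series converges for all $\tau\in\mathbbm{C}$, since $a_{n+m}/(n+m)! = (-1)^m B^m/(m!(m+n)!)$ --- gives at once
\[ K_n(B,\tau)=\sum_{m\ge 0}\frac{a_{n+m}}{(n+m)!}\tau^{n+m}=\tau^n\sum_{m\ge 0}\frac{(-1)^m}{m!(m+n)!}(B\tau)^m, \]
which is the first equality in (\ref{carmina16}).

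Next I would substitute the series $J(\theta B\tau)=\sum_{m\ge 0}\frac{(-1)^m}{(m!)^2}(B\tau)^m\theta^m$ into $\int_0^1\frac{(1-\theta)^{n-1}}{(n-1)!}J(\theta B\tau)\,d\theta$ and integrate term by term; this is legitimate because, for $B$ and $\tau$ fixed, the partial sums converge uniformly on the compact interval $[0,1]$. The Euler Beta integral $\int_0^1(1-\theta)^{n-1}\theta^m\,d\theta=\frac{m!\,(n-1)!}{(m+n)!}$ turns the $m$-th term into $\frac{(-1)^m(B\tau)^m}{m!(m+n)!}$, and after multiplying by $\tau^n$ one recovers exactly the series above, giving the second equality in (\ref{carmina16}).

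Finally, for $B\ge 0$ I would estimate $|K_n(B,\tau)|\le|\tau|^n\int_0^1\frac{(1-\theta)^{n-1}}{(n-1)!}|J(\theta B\tau)|\,d\theta$, bound $|J(\theta B\tau)|\le\exp(2|\im(\theta B\tau)^{1/2}|)$ by (\ref{carmina14}), and note that since $\theta B\ge 0$ one has $(\theta B\tau)^{1/2}=\sqrt{\theta B}\,\tau^{1/2}$, whence $|\im(\theta B\tau)^{1/2}|=\sqrt{\theta B}\,|\im\tau^{1/2}|\le\sqrt{B}\,|\im\tau^{1/2}|$ because $\theta\le 1$. Pulling the constant $\exp(2\sqrt{B}\,|\im\tau^{1/2}|)$ out of the integral and using $\int_0^1\frac{(1-\theta)^{n-1}}{(n-1)!}\,d\theta=\frac{1}{n!}$ gives (\ref{carmina18}). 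None of this is genuinely difficult; the only steps needing a little care are the termwise integration (covered by uniform convergence on $[0,1]$) and the branch identity $(\theta B\tau)^{1/2}=\sqrt{\theta B}\,\tau^{1/2}$, which is exactly where the hypothesis $B\ge 0$ is used --- for complex $B$ the same argument would only yield a constant involving $|\im(B\tau)^{1/2}|$ rather than the clean $\sqrt{B}\,|\im\tau^{1/2}|$.
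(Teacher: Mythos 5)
Your proposal is correct and follows essentially the same route as the paper: expanding $t^n e^{-Bt}$ and applying the definition of the Borel transform for the first equality, the Beta integral $\int_0^1\frac{(1-\theta)^{n-1}}{(n-1)!}\frac{\theta^m}{m!}\,d\theta=\frac{1}{(m+n)!}$ for the second, and combining (\ref{carmina14}) with (\ref{carmina16}) for the estimate (\ref{carmina18}). Your write-up merely spells out the details (termwise integration and the branch identity $(\theta B\tau)^{1/2}=\sqrt{\theta B}\,\tau^{1/2}$ for $B\geqslant 0$) that the paper leaves implicit.
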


\begin{proof}
  Since $t^n \exp (- Bt) = \sum_{m \geqslant 0} (- 1)^m \frac{t^{n + m}}{m!}
  B^m$, using the definition of the Borel transform, we get the first equality
  of (\ref{carmina16}). Since
  \[ J (\theta B \tau) = \sum_{m \geqslant 0} (- 1)^m \frac{(\theta B
     \tau)^m}{(m!)^2}, \]
  the second equality of (\ref{carmina16}) is a consequence of $\int_0^1
  \frac{(1 - \theta)^{n - 1}}{(n - 1) !} \frac{\theta^m}{m!} d \theta =
  \frac{1}{(m + n) !}$. (\ref{carmina18}) is a consequence of
  (\ref{carmina14}) and (\ref{carmina16}).
\end{proof}

Now we prove Theorem \ref{theorembarbara1}. By Proposition (\ref{annaprop1})
with $\omega = 0$, the function $u : = u_0 v$ is solution of (\ref{barbara10})
where $v \assign \mathbbm{1} + \sum_{n \geqslant 1} v_n$,
\begin{equation}
  \label{carmina21} v_n (t, x, y) \assign \int_{0 < s_1 < \cdots < s_n < 1}
  \int_{\mathbbm{R}^{\nu n}} G_n d^{\nu n} \mu^{\otimes} (\xi) d^n s,
\end{equation}
and
\[ G_n \assign t^n \exp (- t \bar{\Omega} \cdot \xi \otimes_n \xi) \exp
   (iq_0^n (s) \cdot \xi) . \]
Here $q_0^n (s) \cdot \xi \assign q_0 (s_1) \cdot \xi_1 + \cdots + q_0 (s_n)
\cdot \xi_n$ where $q_0 (s) \assign y + s (x - y)$. Since $q_0^n (s) \cdot
\xi$ does not depend on $t$, Lemma \ref{lemmacarmina2} will allow us to obtain
a convenient formulation of the Borel transform of $v$.

Let $\hat{F}_n, \hat{v}_n, \hat{v}$ be defined by
\begin{equation}
  \label{carmina22} \hat{F}_n \assign \exp (iq_0^n (s) \cdot \xi) K_n (
  \bar{\Omega} \cdot \xi \otimes_n \xi, \tau),
\end{equation}
\begin{equation}
  \label{carmina24} \hat{v}_n (\tau, x, y) \assign \int_{0 < s_1 < \cdots <
  s_n < 1} \int_{\mathbbm{R}^{\nu n}} \hat{F}_n d^{\nu n} \mu^{\otimes} (\xi)
  d^n s,
\end{equation}
\begin{equation}
  \label{carmina25} \text{$\hat{v} \assign \mathbbm{1} + \sum_{n \geqslant 1}
  \hat{v}_n$} .
\end{equation}
Let $\kappa, R > 0$. Let $x, y \in \mathbbm{C}^{\nu}$ be such that $|
\mathcal{I} mx|, | \mathcal{I} my| < R$ and let $\tau \in \parabole{\kappa}$.
Let us check that the integral giving the definition of $\hat{v}_n$ is
absolutely convergent and hence that the series giving the definition of
$\hat{v}_{}$ is absolutely convergent. By (\ref{carmina18})
\[ \text{$| \hat{F}_n | \leqslant \exp \lp{1} R (| \xi_1 | + \cdots + | \xi_n
   |_{}) \rp{1} \times \frac{| \tau |^n}{n!} \exp \lp{2} 2 \sqrt{\bar{\Omega}
   \cdot \xi \otimes_n \xi} | \mathcal{I} m (\tau^{1 / 2}) | \rp{2}$} . \]
By Lemma \ref{lemmaclaudia1}
\begin{eqnarray*}
  2| \mathcal{I} m (\tau^{1 / 2}) | \sqrt{\bar{\Omega} \cdot \xi \otimes_n
  \xi} \leqslant &  & 2 \kappa^{1 / 2} \sqrt{n (\xi_1^2 + \cdots + \xi_n^2)}\\
  \leqslant &  & 2 \times \lp{2} \frac{2 \kappa n}{\varepsilon} \srp{2}{1 /
  2}{} \times \sqrt{\frac{\varepsilon}{2} (\xi_1^2 + \cdots + \xi_n^2)}\\
  \leqslant &  & \frac{2 \kappa n}{\varepsilon} + \frac{\varepsilon}{2}
  (\xi_1^2 + \cdots + \xi_n^2) .
\end{eqnarray*}
Hence
\begin{equation}
  \label{carmina26} \text{$| \hat{F}_n | \leqslant \exp \lp{1} R (| \xi_1 | +
  \cdots + | \xi_n |_{}) \rp{1} \times \frac{| \tau |^n}{n!}$} \exp \lp{1}
  \frac{2 \kappa n}{\varepsilon} \rp{1} \exp \lp{2} \frac{\varepsilon}{2}
  (\xi_1^2 + \cdots + \xi_n^2) \rp{2} .
\end{equation}
Let
\[ A \assign \int \exp \lp{1} \frac{2 \kappa}{\varepsilon} +
   \frac{\varepsilon}{2} \xi^2 + R| \xi | \rp{1} d| \mu | (\xi) . \]
The integral in (\ref{carmina24}) is absolutely convergent and
\[ | \hat{v}_n | \leqslant \frac{1}{(n!)^2} \times (A | \tau |)^n . \]
Then the series in (\ref{carmina25}) is absolutely convergent. Since $|
\hat{v} | \leqslant \sum_{n \geqslant 0} \frac{1}{(n!)^2} \times (A | \tau
|)^n = J (- A | \tau |)$ and by (\ref{carmina14})
\[ | \hat{v} | \leqslant \exp \lp{1} 2 \lp{0} A | \tau | \srp{0}{1 / 2}{}
   \rp{1} . \]
This proves (\ref{barbara12}).

By dominated convergence theorem, one can check that $\hat{v}_n$ and hence
$\hat{v}$ are analytic on $\parabole{\kappa} \times \mathbbm{C}^{2 \nu}$ and
hence on $\mathbbm{C}^{1 + 2 \nu}$ since $\kappa$ is arbitrary.

Let us prove that $v$ is the Laplace transform of $\hat{v}$. Let $t \in] 0, +
\infty [$. By (\ref{carmina22}) and \ since $t \longmapsto t^n \exp (- Bt)$ is
the Laplace transform of $\tau \longmapsto K_n (B, \tau)$,
\[ t^n \exp (iq_0^n (s) \cdot \xi) \exp (- t \bar{\Omega} \cdot \xi \otimes_n
   \xi) = \int_0^{+ \infty} \hat{F}_n (\tau) e^{- \frac{\tau}{t}} \frac{d
   \tau}{t} . \]
Then, by (\ref{carmina21}), $v$ is the Laplace transform of $\hat{v}$.

Now we may prove Corollary \ref{corollairebarbara3}. Let $p_t (x, y)$ be the
solution of
\begin{equation}
  \label{carmina30} \begin{array}{l}
    \left\{ \begin{array}{l}
      \partial_t p_t (x, y_{}) = \lp{1} \partial^2_x + c (x) \rp{1} p_t (x,
      y)\\
      \\
      p|_{t = 0} (x, y) =_{} \sum_{q \in \mathbbm{Z}^{\nu}} \delta_{x = y + q}
      \mathbbm{1}
    \end{array} \right.
  \end{array}
\end{equation}
Since the torus is compact and $c$ is a $\tmop{Hermitian}
\tmop{matrix}$-valued analytic function, the spectrum of $H$ is real, discret
and, for $t \in] 0, + \infty [$,
\[ \sum_{n = 1}^{+ \infty} e^{- \lambda_n t} = \int_{[0, 1]^{\nu}} \tmop{Tr}
   \lp{1} p_t (x, x) \rp{1} dx. \]
Let $\mu \assign \sum_{q \in \mathbbm{Z}^{\nu}} c_q \delta_{\xi = 2 \pi q}$.
Then $c$ satisfies the assumptions of Theorem \ref{theorembarbara1}. Let
$\hat{v}$ and $C$ be defined as in Theorem \ref{theorembarbara1}. Then
\begin{equation}
  \label{carmina31} p_t (x, y) = (4 \pi t)^{- \nu / 2} \sum_{q \in
  \mathbbm{Z}^{\nu}} \exp \lp{2} - \frac{(x - y - q)^2}{4 t} \rp{2} \int_0^{+
  \infty} \text{$\hat{v}$} (\tau, x, y + q) e^{- \frac{\tau}{t}} \frac{d
  \tau}{t} .
\end{equation}
Let $\kappa > 0$. By (\ref{barbara12}), the quantity $\text{$\hat{v}$} (\tau,
x, y + q)$ is bounded (uniformly in $q \in \mathbbm{Z}^{\nu}$) by $\exp \lp{1}
C| \tau |^{1 / 2} \rp{1}$ and $C$ only depends on $\max \lp{1} | \im x|, | \im
y| \rp{1}$ and $\kappa$. Hence the series in (\ref{carmina31}) is convergent
for $t \in \demiplan{}$, $x \in \mathbbm{C}^{\nu}$ and $y \in
\mathbbm{C}^{\nu}$. Let
\[ \hat{w} (q, \tau) \assign \int_{[0, 1]^{\nu}} \tmop{Tr} \lp{1}
   \text{$\hat{v}$} (\tau, x, x + q) \rp{1} dx, \]
\[ C_1 \assign \sup \lbc{2}_{} \frac{| \tmop{Tr} M|}{|M|} \ve{1} M \neq 0
   \rbc{2}, C_2 \assign 2 \lp{2} \int \exp \lp{1} \frac{2 \kappa}{\varepsilon}
   + \frac{\varepsilon}{2} \xi^2 \rp{1} d| \mu | (\xi) \srp{2}{1 / 2}{} . \]
Here the supremum is taken over non-vanishing $d \times d$ complex matrices.
Then (\ref{barbara16a}) and (\ref{barbara16b}) hold.

\subsection{Proof of Theorem \ref{theorembarbara3}}

The following lemma will be useful for the proof of Theorem
\ref{theorembarbara3}.

\begin{lemma}
  \label{sonialemme2}Let $T > 0$, $K_1 > 0,$and $\sigma_1 > 0$. Let $f$ be an
  analytic function on $\demidisque{T}$. Assume that, for every $r \geqslant
  0$, there exist $f_r$ analytic on $\disque{T}$ and $g_r$ analytic on
  $\demidisque{T}$ such that
  \[ f = f_r + g_r, \]
  \[ |f_r (t) | \leqslant K_1 \sigma_1^r r! \text{for every } t \in
     \disque{T}, \]
  \[ |g_r (t) | \leqslant K_1 \sigma_1^r r!|t|^r  \text{for every } t \in
     \demidisque{T} . \]
  Let $\sigma_2 > \sigma_1$. Then there exist $K_2 > 0$, $a_0, a_1, \ldots \in
  \mathbbm{C}$, $R_0, R_1, \ldots \in \text{$\mathcal{A} ( \demidisque{T})$}$
  such that
  \begin{equation}
    \label{annabelle3} \left\{ \begin{array}{l}
      f (t) = a_0 + \cdots + a_{r - 1} t^{r - 1} + R_r (t)\\
      \\
      |R_r (t) | \leqslant K_2 \sigma_2^r r!|t|^r
    \end{array}, \right.
  \end{equation}
  for every $r \geqslant 0$ \ and $t \in \demidisque{T}$.
\end{lemma}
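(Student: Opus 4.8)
The plan is to \emph{build} the coefficients $a_n$ out of the genuine Taylor expansions at the origin of the functions $f_r$ (which, unlike $f$, are analytic on the full disk $\disque{T}$), to check that these expansions are mutually consistent, and then to estimate $R_r := f - \sum_{n<r} a_n t^n$ by feeding the hypothesised decomposition $f = f_r + g_r$ into it. First I would write each Taylor series $f_r(t) = \sum_{n\ge 0} c_n^{(r)} t^n$ on $\disque{T}$ and compare them: for $r<r'$ the difference $f_r - f_{r'} = g_{r'} - g_r$ is analytic on $\disque{T}$, while on $\demidisque{T}$ it is $O(|t|^r)$ (bound $|g_{r'}(t)|$ by $K_1\sigma_1^{r'}r'!\,T^{r'-r}|t|^r$, using $|t|<T$). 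Since $\demidisque{T}$ contains the segment $(0,T)$, a function analytic on all of $\disque{T}$ that is $O(|t|^r)$ there must vanish to order $\ge r$ at $0$ --- otherwise its lowest non-vanishing Taylor term would dominate as $t\to 0^+$. Hence $c_n^{(r)} = c_n^{(r')}$ for every $n<r$, and I set $a_n := c_n^{(n+1)}$, so that $a_n = c_n^{(r)}$ whenever $r>n$.

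The gain is that $\sum_{n<r} a_n t^n$ is exactly the degree-$(r-1)$ Taylor polynomial $P_r$ of $f_r$, whence
\[ R_r \;=\; (f_r - P_r) + g_r , \]
where $f_r - P_r = \sum_{n\ge r} c_n^{(r)} t^n$ is analytic on $\disque{T}$ and vanishes to order $\ge r$ at $0$, so that $\phi_r := (f_r - P_r)/t^r$ is analytic on $\disque{T}$. Now $g_r$ already obeys $|g_r(t)|\le K_1\sigma_1^r r!|t|^r$. For the other term, Cauchy's inequalities give $|c_n^{(r)}|\le K_1\sigma_1^r r!\,\rho^{-n}$ and $|a_n|\le K_1\sigma_1^{n+1}(n+1)!\,\rho^{-n}$ for every $\rho<T$; summing the latter bounds $\|P_r\|$ on $\disque{T}$ by a quantity of the form $B^r r!$ for some constant $B$. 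Applying the maximum modulus principle to $\phi_r$ on a fixed circle $|t|=\rho_0$, $\rho_0<T$, then yields $|f_r(t)-P_r(t)|\le C^r r!\,|t|^r$ on $\demidisque{\rho_0}$, and on the remaining part $\rho_0\le|t|<T$ of $\demidisque{T}$ one uses the crude bound $|R_r(t)|\le |f(t)|+|P_r(t)|$ together with $|t|^r\ge\rho_0^r$. Adding everything up and absorbing the $r$-independent constants and the geometric-in-$r$ factors into a single $K_2$ --- which is precisely where the slack $\sigma_2>\sigma_1$ is used --- gives (\ref{annabelle3}).

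The conceptual core is the rigidity step above: a function analytic on the whole disk that is small on a half-disk reaching the origin is automatically small, to the same order, on the whole disk; this is what forces the Taylor coefficients of the $f_r$ to stabilise and lets the single sequence $(a_n)$ be defined at all. The real labour, more a chore than an obstacle, is the estimate: one must track how the powers produced by Cauchy's inequalities --- essentially $(1/\rho_0)^r$ and the growth $\sim\sigma_1^r r!$ of $\|P_r\|$ --- can be swallowed by $\sigma_2^r r!$, which is what pins down the admissible choice of the auxiliary radius $\rho_0<T$.
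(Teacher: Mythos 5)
Your proposal is correct and follows essentially the same route as the paper's own proof: subtract the degree-$(r-1)$ Taylor polynomial of $f_r$, write the difference as $t^r$ times a function analytic on the disk, control that function by Cauchy estimates and the maximum modulus principle, keep the hypothesised bound on $g_r$, and absorb the geometric-in-$r$ factors into $K_2\sigma_2^r r!$. The differences are minor: you make explicit the stabilisation of the Taylor coefficients of the $f_r$ (which the paper uses tacitly when it defines the $a_q$ from $f_r$), while your two-region estimate (on $|t|\le\rho_0$ and on $\rho_0\le|t|<T$) could be streamlined, as in the paper, by applying the maximum modulus principle on circles of radius tending to $T$, which gives the slightly better factor $T^{-r}$ in place of $\rho_0^{-r}$.
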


\begin{proof}
  Let $r \geqslant 0$. Let  M:=$\sup_{|t| < T} |f_r (t) |$ and $a_0 \assign
  f_r (0), \ldots, a_{r - 1} \assign \frac{1}{(r - 1) !} f_r^{(r - 1)} (0)$. \
  Let $w_r$ be the analytic function on $\disque{T}$ defined by
  \[ w_r (t) = \frac{1}{t^r} \left( f_r (t) - (a_0 + \cdots + a_{r - 1} t^{r -
     1}) \right) . \]
  Let $q = 0, \ldots, r - 1$ and let $t \in \text{$\disque{T}$}$. By Cauchy
  formula, $|a_q t^q | \leqslant M$. By maximum modulus principle
  \[ |w_r (t) | \leqslant \frac{M}{T^r} (1 + r) \leqslant \frac{\text{$K_1
     \sigma_1^r r!$}}{T^r} (1 + r) . \]
  Let $\sigma_2 > \sigma_1$. Let us choose $K_2 > 0$ such that, for $r
  \geqslant 0, \frac{\text{$K_1 \sigma_1^r$}}{T^r} (1 + r) \leqslant
  \frac{1}{2} K_2 \sigma_2^r$ and $\text{$K_1 \sigma_1^r \leqslant
  \frac{1}{2}$} K_2 \sigma_2^r$. Let $R_r (t) \assign t^r w_r (t) + g_r (t)$.
  Then (\ref{annabelle3}) holds for $t \in \demidisque{T}$.
\end{proof}

\begin{remark}
  \label{soniaremark2}In fact, we use a parametric version of Lemma
  \ref{sonialemme2}. The function $f$ may depend on additional parameters. Of
  course, the constants which appear in the assumptions of Lemma
  \ref{sonialemme2} must be independent of the parameters. Moreover, we have
  to assume that functions take their values in a complex finite dimensional
  space.
\end{remark}

Now we prove Theorem \ref{theorembarbara3}. Instead of proving
(\ref{barbara21}) and (\ref{barbara22}) directly, we shall prove the existence
of the following factorization
\begin{equation}
  \label{annabelle6} u = u_{\omega} v_{\omega}
\end{equation}
where
\[ v_{\omega} \assign \mathbbm{1} + a^w_1 (x, y) t + \cdots + a^{\omega}_{r -
   1} (x, y) t^{r - 1} + R^{\omega}_r (t, x, y) \]
and $R^{\omega}_r$ satisfying (\ref{barbara22}). By (\ref{anna0.0}),
$\frac{u_{\omega}}{u_0}$ is analytic near $t = 0$. Then (\ref{barbara21}) and
(\ref{barbara22}) will hold since Borel summability properties of an expansion
do not change if it is multiplied by a convergent expansion near $t = 0$.

Let $v$ and $T_c$ be respectively defined by (\ref{anna3}) and (\ref{anna4}).
Then $v$ verifies all the properties of Proposition \ref{annaprop1}. We shall
use Lemma \ref{sonialemme2} with Remark \ref{soniaremark2}. For $m \geqslant
1, w \in \mathbbm{C}$,
\begin{equation}
  \label{annabelle7} \text{$e^w = 1 + w + \cdots + \frac{w^{m - 1}}{(m - 1) !}
  + \frac{w^m}{(m - 1) !} \int_0^1 (1 - \theta)^{m - 1} e^{\theta w} d
  \nonesep \theta^{}$} .
\end{equation}
Let $r > 0$. Using (\ref{annabelle7}) with $w = - \Omega . \xi \otimes_n \xi$
in (\ref{anna3}), one gets, for $t \in \demidisque{T_c}$, a decomposition $v =
f_r + g_r$ where
\[ \text{$f_r \assign \mathbbm{1} + \sum_{\tmscript{\begin{array}{l}
     n + m < r\\
     n \geqslant 1, m \geqslant 0
   \end{array}}} t^n \int_{0 < s_1 < \cdots < s_n < 1} \int_{\mathbbm{R}^{\nu
   n}} F_{n, m} d^{\nu n} \mu^{\otimes} (\xi) d^n s,$} \]
\[ F_{n, m} \assign \frac{1}{m!} (- \Omega . \xi \otimes_n \xi)^m \exp \lp{1}
   iq_{\omega}^n (s) \cdot \xi) \rp{1}, \]
and $g_r = g_{r^{}}^{(1)} + g_{r^{}}^{(2)}$, where
\[ \text{$g_{r^{}}^{(1)} \assign \sum_{\tmscript{\begin{array}{l}
     n + m = r\\
     n \geqslant 1, m \geqslant 1
   \end{array}}} t^n \int_{0 < s_1 < \cdots < s_n < 1} \int_{\mathbbm{R}^{\nu
   n}} \int_0^1 G_{n, m} d \nonesep \theta d^{\nu n} \mu^{\otimes} (\xi) d^n
   s$,} \]
\[ G_{n, m} \assign \frac{1}{(m - 1) !} (- \Omega . \xi \otimes_n \xi)^m (1 -
   \theta)^{m - 1} \exp (- \theta \Omega . \xi \otimes_n \xi) \exp
   (iq_{\omega}^n (s) \cdot \xi), \]
and
\[ g_{r^{}}^{(2)} \assign \sum_{n \geqslant r} v_n . \]
In what follows, we shall fix some $R > 0$ and some $\varepsilon \in] 0,
\varepsilon_0 [$. We denote
\[ A \assign \int_{\mathbbm{R}^{\nu}} \exp \lp{1} \varepsilon \xi^2 + 4 R| \xi
   | \rp{1} d| \mu | (\xi) \]
and we take arbitrary $x, y \in \mathbbm{C}^{\nu}$ such that $|x| < R$ \ and
$|y| < R$.

We begin to check that $f_r$ verifies assumptions of Lemma \ref{sonialemme2}.
Obviously, by definition of $\Omega . \xi \otimes_n \xi$ \ and $q_{\omega}^n
(s) \cdot \xi$, $f_r$ are analytic for $t \in \disque{\frac{\pi}{| \omega
|}}$. By (\ref{claudia30}), (\ref{claudia32}) and (\ref{anna04.a}), for $t \in
\disque{T_c}$,
\[ |f_r | \leqslant 1 + \sum_{\tmscript{\begin{array}{l}
     n + m \leqslant r - 1\\
     n \geqslant 1, m \geqslant 0
   \end{array}}} \int_{0 < s_1 < \cdots < s_n < 1} \int_{\mathbbm{R}^{\nu n}}
   H_{n, m} d^{\nu n} | \mu |^{\otimes} (\xi_{}) d^n s_{} \]
where
\[ H_{n, m} \assign \frac{T_c^n}{m!} \lp{1} 2 nT_c ( \sum_{j = 1}^n \xi_j^2)
   \rp{1}^m \exp \lp{1} 4 R (| \xi_1 | + \cdots | \xi_n |) \rp{1} . \]
But
\begin{equation}
  \label{annabelle10} \frac{\varepsilon^m}{m!} \lp{1} \sum_{j = 1}^n \xi_j^2
  \rp{1}^m \leqslant \exp \lp{1} \varepsilon \sum_{j = 1}^n \xi_j^2 \rp{1}
  \leqslant \exp (\varepsilon \xi_1^2) \cdots \exp (\varepsilon \xi_n^2) .
\end{equation}
Then
\[ |f_r | \leqslant 1 + \sum_{\tmscript{\begin{array}{l}
     n + m \leqslant r - 1\\
     n \geqslant 1, m \geqslant 0
   \end{array}}} \frac{n^m}{n!} \left( \frac{2}{\varepsilon} \right)^m A^n
   T_c^{m + n} . \]
Let $B \assign \max \lp{1} 1, \frac{2}{\varepsilon}, A, T_c \rp{1}$. Then, for
$(n, m) \in \mathbbm{N}^2$ such that $n + m < r, n \geqslant 1, m \geqslant
0$,
\[ \frac{n^m}{n!} \left( \frac{2}{\varepsilon} \right)^m A^n T_c^{m + n}
   \leqslant r^r B^{2 r} . \]
Since $\sum_{\tmscript{\begin{array}{l}
  n + m < r\\
  n \geqslant 1, m \geqslant 0
\end{array}}} 1 = \frac{r (r - 1)}{2}$,
\[ |f_r | \leqslant 1 + \frac{r (r - 1)}{2} r^r B^{2 r} . \]
Then Stirling formula implies the existence of $K_1 > 0$ \ and $\sigma_1 > 0$
such that for $t \in \disque{T_c}$ and $r \geqslant 1,$
\begin{equation}
  \label{annabelle12} |f_r | \leqslant K_1 \sigma_1^r r!.
\end{equation}
We check now that $g_r$ verifies the assumptions of Lemma \ref{sonialemme2}.
Let $t \in \demidisque{T_c}$. By (\ref{claudia30}), (\ref{claudia32}) and
(\ref{anna04.a})
\[ |g_r^{(1)} | \leqslant \sum_{\tmscript{\begin{array}{l}
     n + m = r\\
     n \geqslant 1, m \geqslant 1
   \end{array}}} \int_{0 < s_1 < \cdots < s_n < 1} \int_{\mathbbm{R}^{\nu n}}
   L_{n, m} d^{\nu n} | \mu |^{\otimes} (\xi_{}) d^n s_{} \]
where
\[ L_{n, m} \assign \frac{|t|^n}{m!} \lp{1} 2 n|t| ( \sum_{j = 1}^n \xi_j^2)
   \rp{1}^m \exp \lp{1} 4 R (| \xi_1 | + \cdots | \xi_n |) \rp{1} . \]
By (\ref{annabelle10})
\[ |g_r^{(1)} | \leqslant |t|^r \sum_{\tmscript{\begin{array}{l}
     n + m = r\\
     n \geqslant 1, m \geqslant 1
   \end{array}}} \frac{n^m}{n!} \left( \frac{2}{\varepsilon} \right)^m A^n .
\]
By (\ref{anna4.5}), $|v_n | \leqslant \frac{1}{n!} A^n |t|^n$. Hence
\[ |g_{r^{}}^{(2)} | \leqslant \frac{|t|^r}{T_c^r} \exp (AT_c) . \]
Since $g_{r^{}} = g_r^{(1)} + g_{r^{}}^{(2)}$and by similar arguments which
allow us to obtain (\ref{annabelle12}), we get the existence of $K_1 > 0$ \
and $\sigma_1 > 0$ such that for $t \in \demidisque{T_c}$ and $r \geqslant 1,$
\begin{equation}
  \label{annabelle14} |g_r (t) | \leqslant K_1 \sigma_1^r r!|t|^r .
\end{equation}
By (\ref{annabelle12}) and (\ref{annabelle14}), we can use Lemma
\ref{sonialemme2} with Remark \ref{soniaremark2}. Then factorization
(\ref{annabelle6}) holds and $R^{\omega}_r$ satisfies (\ref{barbara22}). This
proves (\ref{barbara21}) and (\ref{barbara22}).

Let us prove the regularity of the functions $a_1, a_2, \ldots$ and $R_0, R_1,
\ldots$ The function $V$ defined by $V (x) = - \frac{\omega^2}{4} x^2 + c (x)$
is analytic on $\mathbbm{C}^{\nu}$. By a property of Minakshisundaram-Pleijel
expansion , $a_1, a_2, \ldots \in \mathcal{A} ( \mathbbm{C}^{2 \nu})$ and
hence $R_0, R_1, \ldots \in \mathcal{A} ( \demidisque{T_c} \times
\mathbbm{C}^{2 \nu})$.

\section{Appendix}

Let $u$ be the solution of (\ref{barbara20}) and $v$ be the function defined
by the factorization $u = u_{\omega} v.$ The goal of this section is to give a
heuristic explanation of the following (deformation) formula
\begin{equation}
  \label{maria3} v = \mathbbm{1} + \sum_{n \geq 1} v_n,
\end{equation}
where
\[ v_n \assign \int_{0 < s_1 < \cdots < s_n < t} F_n ds_1 \cdots ds_n, \]
\[ F_n \assign \lb{2} \exp ( \sum_{j, k = 1}^n \Omega_{j, k}^{\natural}
   \partial_{z_j} \cdot \partial_{z_k}) c (z_n) \cdots c (z_1) \rb{2}
   \ve{2}_{\tmscript{\begin{array}{l}
     \text{$z_1 = q_{\omega}^{\natural} (s_1)$}\\
     \text{$\cdots$}\\
     \text{$z_n = q_{\omega}^{\natural} (s_n)$}
   \end{array}}} . \]

Here $t > 0$. $\Omega^{\natural}$ and $q_{\omega}^{\natural}$ are defined in
section \ref{proof_theorem} (cf. (\ref{claudia0}) and (\ref{anna03})). We
prove and use this formula in a rigorous context (cf. (\ref{anna6})). This
heuristic viewpoint has the advantage to explain the shape of the formula and
gives a simple interpretation of matrix $\Omega^{\natural}$. However, from a
rigorous point of view, the proof chosen in Proposition \ref{annaprop1} seems
to be more efficient.

For proving (\ref{maria3}), we use Wiener integral representation of $u$ and
the following version of Wick's theorem. Let $E$ be a real vector space. Let
$\left\langle ., . \right\rangle$ be a scalar product on $E$ and let $A$ be a
symmetric invertible operator. Then
\begin{equation}
  \label{maria5} \frac{\int_E \exp \lp{1} \left\langle \theta, x \right\rangle
  \rp{1} \exp \lp{1} - \frac{1}{4} \left\langle Ax, x \right\rangle \rp{1}
  dx}{\int_E \exp \lp{1} - \frac{1}{4} \left\langle Ax, x \right\rangle \rp{1}
  dx} = \exp \lp{1} \left\langle A^{- 1} \theta, \theta \right\rangle \rp{1} .
\end{equation}
Of course, if $E$ has finite dimension and $A > 0$, this formula is rigorous
and means that the Laplace transform of a Gaussian is a Gaussian.

Let $V (x) = - \frac{\omega^2}{4} x^2 + c (x)$. The justification of the
deformation formula starts from the following representation of $u$
\begin{equation}
  \label{maria6} u = \int \tmop{Texp} \lp{2} \int_{0^{}}^t V \lp{1} q (s)
  \rp{1} \tmop{ds} \rp{2} \exp \lp{2} - \frac{1}{4} \int_0^t \dot{q}^2 (s) ds
  \rp{2} dq,
\end{equation}
where integration is taken over all the paths $q : \mathbbm{R} \longrightarrow
\mathbbm{R}^{\nu}$ such that $q (0) = y$ and $q (t) = x$. $\tmop{Texp}$
denotes the time-ordered exponential function:
\[ \tmop{Texp} \lp{2} \int_{0^{}}^t V \lp{1} q (s) \rp{1} \tmop{ds} \rp{2}
   \assign \sum_{n \geqslant 0} \int_{0 < s_1 < \cdots < s_n < t} V \lp{1} q
   (s_n) \rp{1} \cdots V \lp{1} q (s_1) \rp{1} ds_1 \cdots ds_n . \]
Then
\[ u = \int \tmop{Texp} \lp{2} \int_{0^{}}^t c \lp{1} q (s) \rp{1} \tmop{ds}
   \rp{2} \exp \lp{2} - \frac{1}{4} \int_0^t \lp{1} \dot{q}^2 (s) + \omega^2
   q^2 (s) \rp{1} ds \rp{2} dq \]
Using the decomposition $q (s) = q^{\natural}_{\omega} (s) + w (s)$ and the
fact that $q^{\natural}_{\omega}$ is an extremal path for the action
\[ S (q) \assign \int_0^t \lp{1} \dot{q}^2 (s) + \omega^2 q^2 (s) \rp{1} ds,
\]
which is quadratic with respect to $q$, we get $S (q) = S
(q_{\omega}^{\natural}) + S (w)$. Hence
\begin{equation}
  \label{maria7} u = \exp \lp{1} - \frac{1}{4} S (q^{\natural}_{\omega})
  \rp{1} \int \tmop{Texp} \lp{2} \int_{0^{}}^t c \lp{1} q_{\omega}^{\natural}
  (s) + w (s) \rp{1} \tmop{ds} \rp{2} \exp \lp{1} - \frac{1}{4} S (w) \rp{1}
  dw,
\end{equation}
where integration is taken over all the paths $w : \mathbbm{R} \longrightarrow
\mathbbm{R}^{\nu}$ such that $w (0) = 0$ and $w (t) = 0$. By (\ref{maria7}),
with $c \equiv 0$,
\[ u_{\omega} = \exp \lp{1} - \frac{1}{4} S (q^{\natural}_{\omega}) \rp{1}
   \int \exp \lp{1} - \frac{1}{4} S (w) \rp{1} dw. \]
Then $u = u_{\omega} v$ where
\[ v = \frac{\int \tmop{Texp} \lp{2} \int_{0^{}}^t c \lp{1}
   q^{\natural}_{\omega} (s) + w (s) \rp{1} \tmop{ds} \rp{2} \exp \lp{1} -
   \frac{1}{4} S (w) \rp{1} dw}{\int \exp \lp{1} - \frac{1}{4} S (w) \rp{1}
   dw} . \]
Then $v = \sum_{n \geqslant 0} \int_{0 < s_1 < \cdots < s_n < t} \tilde{v}_n
ds_1 \cdots ds_n$ where
\begin{equation}
  \label{maria8} \tilde{v}_n \assign \frac{\int_{} c \lp{1}
  q^{\natural}_{\omega} (s_n) + w (s_n) \rp{1} \cdots c \lp{1}
  q^{\natural}_{\omega} (s_1) + w (s_1) \rp{1} \exp \lp{1} - \frac{1}{4} S (w)
  \rp{1} dw}{\int \exp \lp{1} - \frac{1}{4} S (w) \rp{1} dw}
\end{equation}
By Taylor formula
\[ c \lp{1} q^{\natural}_{\omega} (s) + w (s) \rp{1} = [\exp (w (s) \cdot
   \partial_z) c (z)] |_{z = q^{\natural}_{\omega} (s)} . \]
Then
\begin{equation}
  \label{maria9} \tilde{v}_n = \lb{3} \frac{\int_{} \exp \lp{1} \sum_{j = 1}^n
  w (s_j) \cdot \partial_{z_j} \rp{1} \exp \lp{1} - \frac{1}{4} S (w) \rp{1}
  dw}{\int \exp \lp{1} - \frac{1}{4} S (w) \rp{1} dw} c (z_n) \cdots c (z_1)
  \rb{3} \ve{3}_{\tmscript{\begin{array}{l}
    \text{$z_1 = q^{\natural}_{\omega} (s_1)$}\\
    \text{$\cdots$}\\
    \text{$z_n = q^{\natural}_{\omega} (s_n)$}
  \end{array}}} .
\end{equation}
To use (\ref{maria5}), let us define
\[ E \assign \{w : \mathbbm{R} \longrightarrow \mathbbm{R}^{\nu}, w (0) = w
   (t) = 0\}, \]
\[ \left\langle w_1, w_2 \right\rangle \assign \int_0^t w_1 (s) \cdot w_2 (s)
   ds, \]
\[ A \assign - \frac{d^2}{ds^2} + \omega^2 . \]
Then
\begin{equation}
  \label{maria10} A^{- 1} w (s) = \int_0^t \frac{\tmop{sh} (\omega s \wedge
  s') \tmop{sh} \lp{1} \omega (t - s \vee s') \rp{1}}{\omega \tmop{sh} (\omega
  t)} w (s') ds' .
\end{equation}
As in quantum field theory, the operator $A^{- 1}$ can be viewed as a
propagator. Since
\[ \text{$\sum_{j = 1}^n w (s_j) \cdot \partial_{z_j} = \lba{1} \sum_{j =
   1}^n \delta_{s_j} \partial_{z_j}, w \rba{1}$} \]
and as
\[ \text{$\lba{1} A^{- 1} \sum_{j = 1}^n \delta_{s_j} \partial_{z_j}, \sum_{j
   = 1}^n \delta_{s_j} \partial_{z_j} \rba{1} = \sum_{j, k = 1}^n \Omega_{j,
   k}^{\natural} \partial_{z_j} \cdot \partial_{z_k}$}, \]
(\ref{maria5}) gives
\begin{equation}
  \frac{\int_{} \exp \lp{1} \sum_{j = 1}^n w (s_j) \cdot \partial_{z_j} \rp{1}
  \exp \lp{1} - \frac{1}{4} S (w) \rp{1} dw}{\label{maria11} \int \exp \lp{1}
  - \frac{1}{4} S (w) \rp{1} dw} = \exp \lp{2} \sum_{j, k = 1}^n \Omega_{j,
  k}^{\natural} \partial_{z_j} \cdot \partial_{z_k} \rp{2} .
\end{equation}
The deformation formula (\ref{maria3}) is then a consequence of (\ref{maria9})
and (\ref{maria11}).

\bigskip

REFERENCES
\medskip

[A-H] S. A. Albeverio, R. J.Hoegh-Krohn, \tmtextit{\tmtextup{Mathematical
Theory of Feynman path integrals}}, Lecture Notes in Mathematics
\tmtextbf{523} (1976).

[B-B] R. Balian and C. Bloch, \tmtextit{\tmtextup{Solutions of the
Schr\"odinger equation in terms of classical paths}}, Ann. of Phys.
\tmtextbf{85} (1974), 514-545.

[Co1] Y. Colin de Verdi\`ere, \tmtextit{\tmtextup{Spectre du Laplacien et
longueurs des g\'eod\'esiques p\'eriodiques.I}}, Composio Mathematica
\tmtextbf{27-1} (1973), 83-106.

[Co2] Y. Colin de Verdi\`ere, \tmtextit{\tmtextup{Spectre du Laplacien et
longueurs des g\'eod\'esiques p\'eriodiques.II}} , Composio Mathematica
\tmtextbf{27-2} (1973), 159-184.

[D-D-P] E. Delabaere, H. Dillinger, F. Pham,
\tmtextit{\tmtextup{\tmtextmd{Exact semiclassical expansions for
one-dimensional quantum oscillator}}}, J. Math. Phys. \tmtextbf{38(12)}
(1997), 6126-6184.

[Fu-Os-Wi] Y. Fujiwara, T.A. Osborn, S.F.J. Wilk; Wigner-Kirkwood expansions,
Physical Review A, 25-1 (1982), 14-34.

[Ga] K. Gawedzki, Construction of quantum-mechanical dynamics by means of path
integrals in phase space, Reports on Mathematical Physics, vol. 6, No. 3
(1972), 327-342.

[Ge-Ya] I.M. Gel'fand, A.M. Yaglom, Integration in functional spaces and its
applications in quantum physics, Journal of Mathematical Physics, 1-1 (1960),
48-69.

[Ha3] T. Harg\'e, Noyau de la chaleur en dimension quelconque avec potentiel
en temps petit, prepublication Orsay (1998).

[Ha6] T. Harg\'e, A deformation formula for the heat kernel (2013).

[Ha7] T. Harg\'e, Some remarks on the complex heat kernel on
$\mathbbm{C}^{\nu}$ in the scalar potential case (2013).

[Ha8] T. Harg\'e, Borel summation of the semi-classical expansion of the
partition function associated to the Schr\"odinger equation with a one-well
potential (2013).

[It] K. Ito, \tmtextup{Generalized uniform complex measures in the Hilbertian
metric space with their applications to the Feynman integral}, Fifth berkeley
Symp. on Math. Statist. and Prob. \tmtextbf{2} (1967), 145--161.

[On] E. Onofri, On the high-temperature expansion of the density matrix,
American Journal Physics, 46-4 (1978), 379-382.

[Pa] A. Pazy, Semigroups of Linear Operator and Applications to Partial
Differential Equations.

[Ru] W. Rudin, Real and complex analysis, section 6.

[So] A.D. Sokal, \tmtextup{An improvement of Watson's theorem on Borel
summability}, J. Math. Phys \tmtextbf{21(2)} (1980), 261-263.

[V1] A. Voros, The return of the quartic oscillator.\tmtextit{ \tmtextup{The
complex WKB method}}, Annales de l'institut Henri Poincar\'e (A) Physique
th\'eorique \tmtextbf{39-3} (1983), 211-338.

[V2] A. Voros, \tmtextit{\tmtextup{Schr\"odinger equation from}}
$\tmmathbf{O}(h)$ \tmtextit{\tmtextup{to}} $\tmmathbf{O}(h^{\infty})$, Path
Integrals from meV to MEV (1985), 173-195.

\bigskip

D\'epartement de Math\'ematiques, Universit\'e de Cergy-Pontoise, 95302
Cergy-Pontoise, France.

\end{document}